\newcommand{\overbar}[1]{\mkern 1.5mu\overline{\mkern-2mu#1\mkern-2mu}\mkern 1.5mu}
\newtheorem{theorem}{Theorem}
\newtheorem{lemma}{Lemma}
\begin{document}
%
\title{Analysis of Signals via Non-Maximally Decimated Non-Uniform Filter Banks}
%
%
%


\author{Sandeep~Patel, Ravindra~Dhuli, and~Brejesh~Lall
\thanks{Sandeep Patel and Brejesh Lall are with the Bharti School of Telecom Technology and Management, Indian Institute of Technology Delhi, New Delhi, India - 110 016
(e-mail: sanpatel90@gmail.com; brejesh@ee.iitd.ac.in).}
\thanks{Ravindra Dhuli is with the Department of Electronics and Communications Engineering, Vellore Institute of Technology, Andhra Pradesh, India - 522 237 (email: ravindra.d@vitap.ac.in).}
\thanks{\textcopyright 2019 IEEE.  Personal use of this material is permitted.  Permission from IEEE must be obtained for all other uses, in any current or future media, including reprinting/republishing this material for advertising or promotional purposes, creating new collective works, for resale or redistribution to servers or lists, or reuse of any copyrighted component of this work in other works.}
\thanks{Digital Object Identifier 10.1109/TCSI.2019.2914302}}

\doublebox{\begin{minipage}{1\textwidth}
\begin{center}
\textbf{This is the accepted version of the paper}
\end{center}
\medskip
Link to IEEE version: \href{https://ieeexplore.ieee.org/abstract/document/8716585}{https://ieeexplore.ieee.org/abstract/document/8716585} 

\medskip

\textbf{Digital Object Identifier 10.1109/TCSI.2019.2914302}

\medskip

Citation: S. Patel, R. Dhuli and B. Lall, ``Analysis of Signals via Non-Maximally Decimated Non-Uniform Filter Banks," in IEEE Transactions on Circuits and Systems I: Regular Papers.

\medskip

\textcopyright 2019 IEEE.  Personal use of this material is permitted.  Permission from IEEE must be obtained for all other uses, in any current or future media, including reprinting/republishing this material for advertising or promotional purposes, creating new collective works, for resale or redistribution to servers or lists, or reuse of any copyrighted component of this work in other works.
\end{minipage}}

\clearpage

\maketitle

\begin{abstract}
This paper addresses the important problem of reconstructing a signal from multiple multirate observations. The observations are modeled as the output of an analysis bank, and time-domain analysis is carried out to design an optimal FIR synthesis bank. We pose this as a minimizing the mean-square problem and prove that at least one optimal solution is always possible. A parametric form for all optimal solutions is obtained for a non-maximally decimated filter bank. The necessary and sufficient conditions for an optimal solution, that results in perfect reconstruction (PR), are derived as time-domain pseudocirculant conditions. This represents a novel theoretical contribution in multirate filter bank theory. We explore PR in a more general setting. This results in the ability to design a synthesis bank with a particular delay in the reconstruction. Using this delay, one can achieve PR in cases where it might not have been possible otherwise. Further, we extend the design and analysis to non-uniform filter banks and carry out simulations to verify the derived results.
\end{abstract}

\begin{IEEEkeywords}
Matrix Wiener filter, Multirate filter bank, Perfect reconstruction, Time-domain pseudocirculant conditions.
\end{IEEEkeywords}

%
\IEEEpeerreviewmaketitle

\section{Introduction}
%
%
%
%
%
%

%
%

\IEEEPARstart{M}{ultirate} filter banks have been used extensively for applications in the fields of image and video processing~\citep{jn:strelafb, jn:ChengDFB, jn:HsiangVideoFB}, communications~\citep{jn:cognitivefb,jn:CruzRoldn}, subband adaptive filtering~\citep{jn:shynk}, radar systems~\citep{jn:radar_xu}, etc. 
The filter bank theory evolved from the need to decompose a signal spectrum into multiple frequency subbands in order to efficiently perform a number of operations like compression, coding, noise cancellation, etc. Initially, two channel quadrature mirror filter bank (QMF) was proposed in the literature~\citep{conf:qmf_first_paper}, and later, it was extended to arbitrary $M$-channel case~\citep{jn:vaid_mqmf, jn:vaid_assp_mag}. To better exploit the signal characteristics in a number of applications like audio coding, spectral analysis, subband adaptive filtering, etc., non-uniform filter banks (NUFBs) having different channel bandwidth were introduced~\citep{jn:xie_nufb_adv,conf:hoang,conf:nayebi_nufb_timed}. In~\citep{jn:gadre}, filter banks with different multirate factors were used, to realize the filter transfer functions for reducing the multiplicative complexity. NUFBs with rational sampling factors were considered in~\citep{conf:hoang,jn:vetterli_nufb}. However, they were found to be equivalent 
to integer decimated NUFBs~\citep{conf:Akkarakaran}. A number of design methods
like direct method~\citep{jn:vetterli_nufb}, recombination method~\citep{jn:xie_nufb_adv}, modulation based method~\citep{jn:modulated_nufb}, etc. exist in the literature. Application specific design methods have also been proposed in the literature. For example, in~\citep{jn:acoustic}, an NUFB was designed for acoustic echo-cancellation by maximizing the signal-to-alias ratio. All these methods design both the analysis and synthesis banks of an NUFB. The analysis and synthesis filters in a filter bank can be IIR or FIR. FIR filters have advantage over IIR in respect of stability and linear-phase property. In the literature, various design methods using transform like Lapped Orthogonal Transform~\citep{jn:LoT_malvar}, Generalized Lapped Orthogonal Transform~\citep{jn:genlot}, Extended Lapped Transform~\citep{jn:malvar_fast_elt}, etc. have been proposed to design a perfect reconstruction (PR) FIR maximally decimated uniform filter bank (UFB). A comprehensive analysis of FIR perfect reconstruction filter banks can be found in~\citep{jn:vetterli_fir_pr}. It has been shown that oversampled filter banks (OFBs) offer additional design freedom as compared to critically sampled filter banks~\citep{jn:duval,jn:bolc_OFB_cosinepf}. The inbuilt redundancy present in them has been exploited for a variety of applications~\citep{jn:duhamel,jn:johansson2007flexible,jn:OFBAppLiu,jn:OFBRahimi}. A frame theoretic analysis of OFBs can be found in~\citep{jn:Cvetkovic}. Given an oversampled analysis bank, a parameterization for all synthesis banks resulting in PR was developed in~\citep{jn:bolcskeiframe}. Undersampled filter banks which provide underdetermined representation of signals have found application in compression of signals~\citep{conf:undersampled_compressive}. Despite the filter banks have been around for more than two decades, they continue to be applied to new and emerging areas. Graph signal processing is one such interesting area where filter banks techniques find applications in solving diverse problems~\citep{jn:dsp_graph}. 

The necessity in any scenario to estimate a signal or its properties from its subband components can be addressed well using the filter bank based processing. Expectedly, many research problems appear posing such requirement. Synthesis from subband components has been utilized in applications like subband coding~\citep{jn:vetterli_mfb,conf:rothweiler}, analog/digital conversion~\citep{jn:analog-digital-conversion}, transmultiplexer~\citep{jn:vetterli_mfb}, ultrawideband receiver~\citep{1198100}, TV white space transceiver~\citep{6334306}, channel coding~\citep{Weiss2006}, etc. To get better results, processing on subband components, rather than on full-band signal, have been carried out in problems like speech enhancement~\citep{jn:speech_enhancement_wu}, ECG processing~\citep{conf:afonso_ecg}, subband adaptive filtering~\citep{jn:lee_nlms}, seismic signals processing~\citep{conf:duval_seismic}, etc. The processed subband signals are then combined using a synthesis bank to get the final output. In multirate sensors networks, sensor outputs are modeled as an analysis bank output, and these outputs are fused to get a unified measurement~\citep{jn:jahromisensor}. In~\citep{conf:scrofani_icassp}, construction of a high-resolution signal from multiple low-resolution signals has been carried out. The problem of linear prediction from multirate observations has been examined in~\citep{conf:therrien_prediction}. In many applications, rather than a signal, a property of interest is to be estimated from multirate observations. Time delay estimation from low rate samples is an example of that, and it finds application in speech localization, processing using microphone arrays, radar, underwater acoustics, wireless communications and other areas~\citep{jn:tdoa_gedalyahu, conf:jahromi_tdoa}. Direction of arrival estimation is another example which finds application in wireless communications, medical imaging, seismology, etc.~\citep{5278497}. In [31], estimation of the spectrum of a signal from its low resolution signals was carried out. This is required when a signal can be observed only indirectly using low-rate sensors. 

To address the problem of estimating a signal from its subbands, various solutions using Wiener filtering, Kalman filtering, $\mathcal{H}^\infty$ filtering, compressive sensing, adaptive filtering, etc. have been proposed in the literature. Wiener filtering based methods require joint stationarity of the subband input and desired signals along with the knowledge of second order statistics of them. Kuchler and Therrien~\citep{conf:kuchler} used Wiener filtering to estimate a signal from multiple multirate observations. Scrofani and Therrien~\citep{conf:scrofani_icassp} extended the analysis to two-dimensional signals. Vaidyanathan and Chen~\citep{conf:chen} designed a statistically optimal synthesis bank for a subband coder by using the biorthogonal solution followed by a matrix Wiener filter in the synthesis stage. In case of Kalman filtering, the statistics of the input and noise signals are required to be known even if that vary with time. Multirate Kalman filter based methods can be found in~\citep{jn:chen_multirate_kalman, jn:chen_kalman_nufb, conf:kalman_2D}. $\mathcal{H}^\infty$ optimization based methods, on the other hand, do not require any statistical assumption and perform worst-case design~\citep{conf:vikalo_hinf}. As $\mathcal{H}^\infty$ solution is generally not unique, mixed $\mathcal{H}^2/\mathcal{H}^\infty$ optimization methods~\citep{conf:vikalo_mixedh2hinf,conf:bora_hinf} have also been proposed in the literature. The complexity of $\mathcal{H}^2$ and $\mathcal{H}^2/\mathcal{H}^\infty$ solutions are very high since they are solved as linear matrix inequalities (LMIs)~\citep{conf:bora_hinf}. A least square based solution for the problem was presented in~\citep{conf:hawes_leastsqu}. However, the solution presented  was for two channels only. In~\citep{conf:chai_ssim}, Li Chai et al. formulated the problem using structural similarity criterion and obtained closed-form solution under certain assumptions like the filter length is equal to the decimation factor, the mean of the source is zero, etc. Compresive sensing based method was used in~\citep{5605572} to estimate a signal from multirate observations. However, the signal has to be sparse. In~\citep{conf:hawes_adaptive,jn:delopoulos}, adaptive filters were used to estimate a signal from two observation sequences. An adaptive least square lattice filter was used in~\citep{conf:tanc_adaptive}. However, the filter obtained was periodically time-varying. Statistical analysis of the estimation from the subbands has been performed by a number of authors. For example, in~\citep{jn:jahromiesti3e, jn:Tanc201093, conf:amini_homotopy}, power spectral density of a signal was estimated from multiple observations. In~\citep{conf:JahromiInformation}, a measure for information contained in a multirate observations was defined.

\subsection{Contribution}
In this paper, the problem of synthesizing a signal from its multirate observations is addressed by performing time-domain analysis. This approach has led to some additional novel contributions to the field which are now briefly presented. Under time-domain analysis, the matrix representation for various operations like filtering, downsampling, upsampling, etc. are developed and used to parameterize synthesis bank solutions. This parameterization is consistent and at least one optimum synthesis filter bank is guaranteed. We design the synthesis bank for a more general set up in comparison to existing methods~\citep{conf:hawes_leastsqu,conf:chai_ssim,5605572,conf:hawes_adaptive,jn:delopoulos}. The synthesis filter length can be arbitrary but given. Further, we have used delay in the reconstruction as a parameter to have flexibility in the synthesis filter design, and using that, we obtain better reconstruction error than possible with a zero delay design. Another advantage of our analysis is that we obtain a linear time-invariant (LTI) optimal matrix filter unlike the linear periodically time-varying (LPTV) optimal filters obtained in earlier works~\citep{conf:kuchler,conf:tanc_adaptive}. Our method always results in a solution in comparison to LMI based methods~\citep{conf:vikalo_mixedh2hinf,conf:bora_hinf} where a solution for an LMI may not be feasible~\citep{VANANTWERP2000363}. The necessary and sufficient condition on a synthesis bank solution to result in PR is derived. Under this condition, the polyphase matrices for the analysis and synthesis banks satisfy the general pseudocirculant property~\citep{bk:vaidyn,jn:vaidypoly}, unlike the limited identity matrix condition in the earlier works~\citep{jn:duval,jn:Cvetkovic}. The necessary and sufficient conditions on the analysis bank and the delay for a PR synthesis bank solution to exist are derived. This analysis forms an alternative to the frame theory based approach in previous works~\citep{jn:Cvetkovic, jn:bolcskeiframe}. Time-domain pseudocirculant result derived in this paper form an important theoretical contribution to the multirate theory. The derivation of a range of delay values for which PR is possible is another contribution. We have also extended the design methodology for any arbitrary NUFB. Under experimental results, a number of experiments are presented to highlight contributions and to show application of our work. 

\subsection{Organization}
This paper is organized as follows: Section~\ref{sec_matrixrep} discusses the various matrix representations for time-domain signals and operations. In Section~\ref{sec_wiener}, we introduce the framework used to design the synthesis filter for a given analysis bank. In Section~\ref{sec_firpr}, we derive the pseudocirculant condition in time-domain and present the necessary and sufficient conditions on an analysis bank and on the delay to result in PR. We provide the experimental results in Section~\ref{sec_exp} and finally, concluding remarks are given in Section~\ref{sec_conclusion}. 

\subsection{Notation}
The $(i,j)$-th entry of a matrix $\mathbf{B}$ is given by $B_{i,j}$ or $B(i,j)$. The transpose and conjugate-transpose of a matrix or vector quantity are denoted by the superscripts $(.)^T$ and $(.)^H$, respectively. An identity matrix of size $P$ is represented by $\mathbf{I}$ when the size is implicit and by $\mathbf{I}_P$ otherwise. The pseudoinverse of a matrix $\mathbf{A}$ is denoted by $\mathbf{A}^\dagger $. The superscript $(.)^*$ stands for conjugation of a quantity. Convolution of two sequences is denoted by $\ast$. The ceiling function is represented by $\left\lceil.\right\rceil$.

\section{Matrix Representations}
\label{sec_matrixrep}
In this section, we develop the matrix representations for various multirate operations, which are required for an effective time-domain analysis of multirate filter banks.

\subsection{Time-domain Sequence Representation}
A vector signal is developed from the given signal $v(n)$ by stacking the past $P$ samples as:
\begin{equation}
\mathbf{v}(n) = \begin{bmatrix}
v(n-P+1) & v(n-P+2) & \dots & v(n)
\end{bmatrix}^T.
\end{equation}
It is termed as an observation vector for the signal $v(n)$~\citep{master:koup}. Its flipped version or reversal can be obtained as
\begin{equation}
\overbar{\mathbf{v}}(n) = \mathbf{J}_P \mathbf{v}(n),
\end{equation}
where $\mathbf{J}_P$ is a counter-identity or reverse-identity matrix. For a matrix $\mathbf{B}$, reversal operation results in reversing of order along both rows and columns. Mathematically, the same can be expressed as
\begin{equation}
\overbar{\mathbf{B}} = \mathbf{J}_M \mathbf{B} \mathbf{J}_N
\end{equation}
if the size of the matrix is $M \times N$. For two matrices $\mathbf{A}$ and $\mathbf{B}$ whose product is defined, we have the following relation:
\begin{equation}
\overbar{\mathbf{A} \mathbf{B}} = \overbar{\mathbf{A}} \overbar{\mathbf{B}}.
\end{equation} 

\subsection{Decimation}
Consider a system, as shown in Fig.~\ref{fig_dec_timead}, with a $M$-fold decimator preceded by a time-advance by $l$. The input and output of it are related as
\begin{equation}
y(n) = x(Mn+l).
\label{eq_decirel}
\end{equation}
\begin{figure}[htbp]
\centering
\includegraphics[scale = 1]{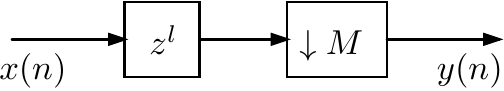}
\vspace{-8pt}
\caption{A decimator preceded by time-advance}
\label{fig_dec_timead}
\end{figure}
Such a system can be used to obtain polyphase components of a signal when $l$ is in the range $[0, M-1]$. Two matrix representations are possible for a system. If an observation vector 
\begin{equation}
\mathbf{y} = \begin{bmatrix}
y(0) & y(1) & \dots & y(P-1)
\end{bmatrix}^T
\end{equation}
for the output is to be evaluated, then the same can be done as follows:
\begin{equation}
\arraycolsep=2pt
\medmuskip=0mu
\begin{split}
\begin{bmatrix}
y(0) \\
y(1) \\
\vdots \\
y(P-1)
\end{bmatrix} &= \begin{bmatrix}
1 & 0 & \dots & 0 & \dots  & 0 \\
0 & 0 & \dots & 1 & \dots  & 0 \\
\vdots & \vdots & \ddots & \vdots & \ddots  & \vdots \\
0 & 0 & \dots & 0 & \dots & 1 
\end{bmatrix} \begin{bmatrix}
x(l) \\
x(l+1) \\
\vdots \\
x(M+l) \\
\vdots \\
x(M(P-1) + l)
\end{bmatrix}, \\
\mathbf{y} &= \mathbf{D}_{P , M(P-1)+1}^0 \mathbf{x} ,
\end{split}
\end{equation}
where $\mathbf{D}_{P , M(P-1)+1}^0$ is known as a decimation matrix. On the other hand, if an input sample vector 
\begin{equation}
\mathbf{x} = \begin{bmatrix}
x(0) & x(1) & \dots & x(P-1)
\end{bmatrix}^T
\end{equation}
is given, then the corresponding output samples can be obtained as
\begin{equation}
\mathbf{y} = \mathbf{D}_{\lceil \frac{P}{M}  \rceil , P}^l \mathbf{x},
\end{equation} 
where the elements of the decimation matrix $\mathbf{D}_{\lceil P/M  \rceil , P}^l$ are given by
\begin{equation}
\begin{split}
D_{\lceil \frac{P}{M}  \rceil , P}^l(i,j) &= \begin{cases}
1, \qquad &\text{if } j = Mi+l  \\
0, \qquad &\text{otherwise}.
\end{cases}
\end{split}
\end{equation}
Here $l$ is in the range $0 \leq l \leq M-1$. For $l \geq M$, the same matrix representations can be extended.

\subsection{Expansion}
If an input observation vector
\begin{equation}
\mathbf{x} = \begin{bmatrix}
x(0) & x(1) & \dots & x(P-1)
\end{bmatrix}^T 
\end{equation}
is available, then the output of a $M$-fold expander can be obtained as
\begin{equation}
\begin{split}
\mathbf{y} &= \mathbf{U}_{M(P-1)+1 , P} \mathbf{x},
\end{split}
\end{equation}
where $\mathbf{U}_{M(P-1)+1 , P}$ is an expansion matrix. This matrix is related to a decimation matrix as
\begin{equation}
\mathbf{U}_{M(P-1)+1 , P} = (\mathbf{D}_{P , M(P-1)+1}^0 )^T.
\end{equation}
It can be easily verified that
\begin{equation}
\begin{split}
\mathbf{D}_{P , M(P-1)+1}^0  \mathbf{U}_{M(P-1)+1 , P} &= \mathbf{I}_P, \\
\mathbf{U}_{M(P-1)+1 , P} \mathbf{D}_{P , M(P-1)+1}^0 &\neq \mathbf{I}_{M(P-1)+1}.
\end{split}
\end{equation}

\subsection{Convolution}
Consider a causal linear time-invariant (LTI) filter with impulse response $\{h(0),\allowbreak h(1),\allowbreak \dots, \allowbreak h(Q-1)\}$. Its input and output are related as
\begin{equation}
y(n) = \sum_{k=0}^{Q-1} h(k) x(n-k).
\end{equation}
We present the matrix representations for the above relation for two use cases. Under first case, consider an observation vector 
\begin{equation}
\mathbf{y}(n) = \begin{bmatrix}
y(n-P+1) & y(n-P+2) & \dots & y(n)
\end{bmatrix}^T
\end{equation}
for the output which needs to be evaluated. The same can be achieved using
\begin{equation}
\mathbf{y}(n) = \mathbf{H} \mathbf{x}(n),
\label{eq_conv_form1}
\end{equation}
where
\begin{equation}
\arraycolsep = 2pt
\medmuskip=0mu
\mathbf{x}(n) = \begin{bmatrix}
x(n-P+2-Q) & x(n-P+3-Q) & \dots & x(n)
\end{bmatrix}^T,
\end{equation}
and $\mathbf{H}$ is a convolution matrix given by
\begin{equation}
\medmuskip=0mu
\arraycolsep=2pt
\mathbf{H} = \begin{bmatrix}
h(Q-1) & h(Q-2) & \dots & h(0) & \dots  & 0 \\
0 & h(Q-1) & \dots & h(1) & \dots & 0 \\
\vdots & \vdots & \ddots & \vdots & \ddots & \vdots  \\
0 & 0 & \dots & 0 & \dots  & h(0)
\end{bmatrix}.
\end{equation}
The second representation is obtained for a scenario where an input observation vector is given, and the corresponding output samples need to be evaluated. Consider an input observation vector 
\begin{equation}
\mathbf{x} = \begin{bmatrix}
x(0) & x(1) & \dots & x(P-1)
\end{bmatrix}^T,
\end{equation}
then the complete output due to it can be expressed as
\begin{equation}
\thickmuskip = 2mu
\arraycolsep = 1pt
\medmuskip=0mu
\resizebox{1\linewidth}{!}{$
\begin{bmatrix}
y(0) \\
y(1) \\
\vdots \\
y(Q-1) \\
\vdots \\
y(P+Q-2)
\end{bmatrix} = \begin{bmatrix}
h(0) & 0 & \dots & 0 \\
h(1) & h(0) & \dots & 0 \\
\vdots & \vdots & \ddots & \vdots \\
h(Q-1) & h(Q-2) & \dots & 0 \\
\vdots & \vdots & \ddots & \vdots \\
0 & 0 & \dots & h(Q-1) \\
\end{bmatrix} \begin{bmatrix}
x(0) \\
x(1) \\
\vdots \\
x(P-1)
\end{bmatrix}$} 
\end{equation}
or 
\begin{equation}
\mathbf{y} = \tilde{\mathbf{H}} \mathbf{x}.
\label{eq_conv_form2}
\end{equation}
It is implicit in the above representation that $x(n) = 0$ for $n \notin [0, P-1]$. The utility of the representation is in evaluation  of convolution of two filters. The two convolution matrices, so far obtained, are related with each other as:
\begin{equation}
\tilde{\mathbf{H}} = \overbar{\mathbf{H}}^T,
\end{equation}
where $\overbar{\mathbf{H}}$ denotes the reversal of the matrix.

In the following section, we address the design of a synthesis filter bank using the time-domain representations discussed so far.

\begin{figure*}[!t]
\centering
\includegraphics[scale = 0.76]{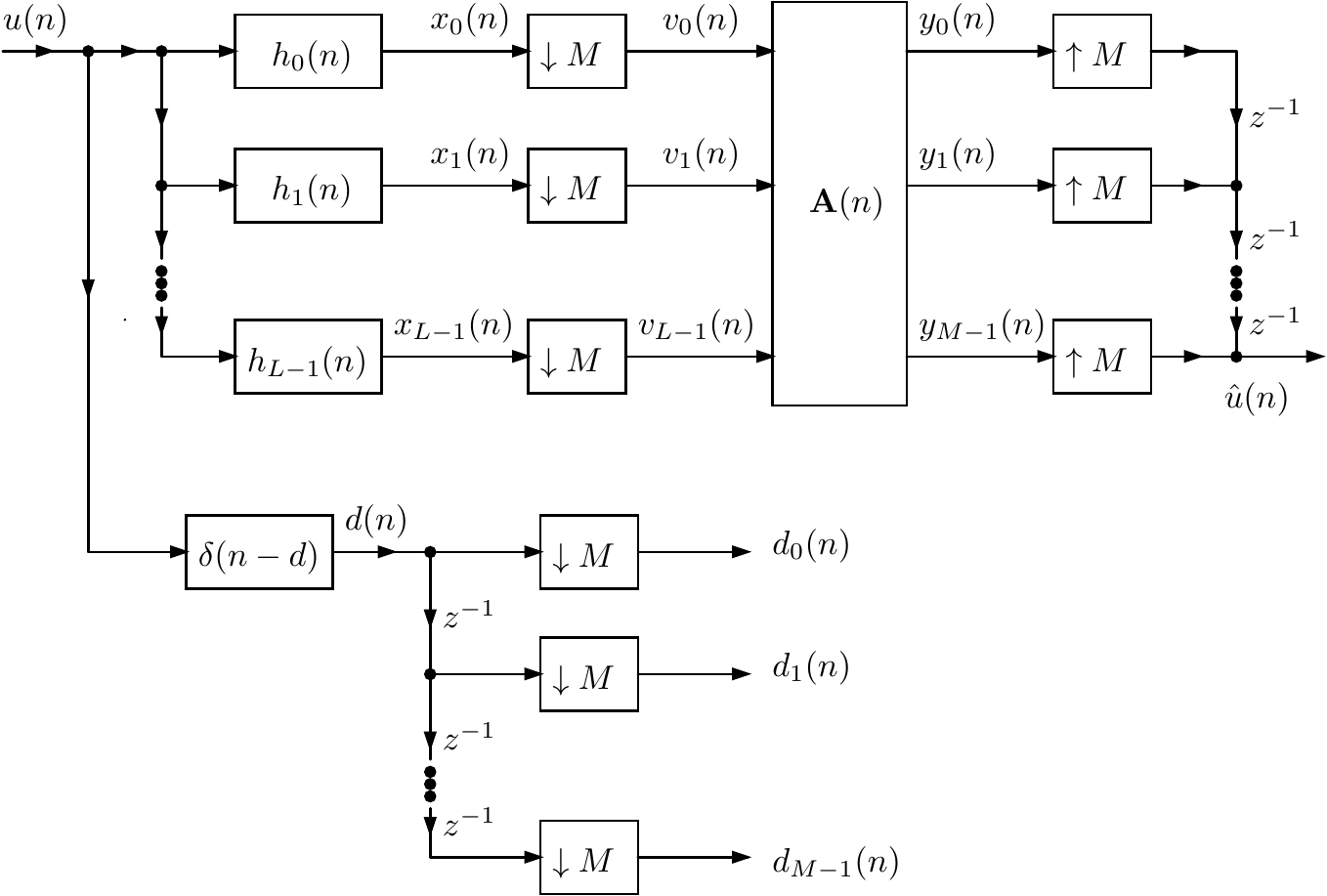}
\vspace{-10pt}
\caption{Matrix Wiener filter based synthesis stage for a UFB}
\label{fig_wfsynth}
\vspace{-12pt}
\end{figure*}
\section{Design Framework}
\label{sec_wiener}
We have multirate observations, derived from various sub bands, and we aim to reconstruct or estimate the original high resolution signal. Initially, we consider observations to be at the same low rate and model them as output of a uniform analysis bank. The case when the observations are at different rates is addressed later in this paper. The subbands are applied to a matrix synthesis filter, as shown in Fig.~\ref{fig_wfsynth}, where reconstruction happens at a low rate to reduce computational cost. The output of the synthesis filter is then unblocked~\citep{conf:rdhulicyclo} to get the final high resolution signal. We model our problem of designing the matrix synthesis filter bank in the framework of Wiener filtering. The Wiener filter is an optimum filter to minimize the mean-square error (MSE). For a multi-input multi-output system, the solution is known as a matrix Wiener filter. The details of the design setup are explained next.

Consider the filter bank shown in Fig.~\ref{fig_wfsynth}. A wide-sense stationary (WSS) input signal $u(n)$ passes through a given analysis bank consisting of the filters $\left\{h_i(n)\right\}_{i=0}^{L-1}$. The analysis filters are assumed to be FIR and causal. The output signals $\left\{x_i(n)\right\}_{i=0}^{L-1}$ of the analysis bank are uniformly decimated by the factor $M$ to obtain the subband signals $\left\{v_i(n)\right\}_{i=0}^{L-1}$ which are then applied to a matrix synthesis filter $\mathbf{A}(n)$. We choose the synthesis filter to be FIR and causal, and formulate a Wiener filtering problem for a given length of the synthesis filter. The desired signal for the problem is derived from the input signal $u(n)$ by adding a given delay $d$ and then performing blocking operation~\citep{conf:rdhulicyclo}. The delay is added to compensate the time lag that appears in the input signal when it passes through the filter bank. The blocking operation ensures that the desired signal $\mathbf{d}(n)$ is jointly stationary with the subband signal $\mathbf{v}(n)$~\citep{jn:sanelsevier}\textemdash a necessary condition required for application of Wiener filtering~\citep{bk:hayes}.

\subsection{FIR Matrix Synthesis Filter Expression}
In Appendix~\ref{app_matder}, we obtain the following expression for an FIR matrix Wiener filter with length $P$:
\begin{equation}
\mathbf{r}_{d_i v} = \mathbf{a}_i^T \mathbf{R}_{vv}, \qquad 0 \leq i \leq M-1,
\label{eq_apufb}
\end{equation}
where $\mathbf{R}_{vv}$ is the auto-correlation matrix of the input signal $\mathbf{v}(n)$ to the filter, $\mathbf{r}_{d_i v}$ is the cross-correlation vector between the desired signal component $d_i(n)$ and the signal $\mathbf{v}(n)$, and $\mathbf{a}_i$ is the $i$-th row of the filter. This expression can be simplified in the context of the UFB given in Fig.~\ref{fig_wfsynth}. In order to do that, a vector $\mathbf{v}_{\text{s}}(n)$ is created by stacking the observation vectors of the subband signals as
\begin{equation}
\mathbf{v}_{\text{s}}(n) = \begin{bmatrix}
\overbar{\mathbf{v}}_0^T(n) & \overbar{\mathbf{v}}_1^T(n) & \dots & \overbar{\mathbf{v}}_{L-1}^T(n)
\end{bmatrix}^T,
\end{equation}
where $\mathbf{v}_j(n)$ is an observation vector representing the $j$-th input to the Wiener filter at time $n$ and is given by
\begin{equation}
\arraycolsep = 2pt
\medmuskip=0mu
\begin{split}
\mathbf{v}_j(n) = &\begin{bmatrix}
v_j(n-P+1) &  v_j(n-P+2) &  \dots & v_j(n)
\end{bmatrix}^T.
\end{split}
\end{equation}
The quantities $\mathbf{R}_{vv}$ and $\mathbf{r}_{d_i v}$ can be expressed in terms of $\mathbf{v}_{\text{s}}(n)$ as
\begin{equation}
\begin{split}
\mathbf{R}_{vv} &= \mathbf{E}[\mathbf{v}_{\text{s}}(n) \mathbf{v}_{\text{s}}^H(n)], \\
\mathbf{r}_{d_i v} &= \mathbf{E}[ d_i(n) \mathbf{v}_{\text{s}}^H(n)].
\end{split}
\end{equation}
To make use of the Wiener filter in~(\ref{eq_apufb}), we derive an expression of the vector $\mathbf{v}_{\text{s}}(n)$ in terms of the input to the filter bank. We can note from Fig.~\ref{fig_wfsynth} that $v_j(n)=x_j(Mn)$ and this relationship can also be expressed in matrix form as
\begin{equation}
\begin{split}
\mathbf{v}_j(n) &= \mathbf{D}^0_{P , M(P-1)+1} \mathbf{x}_j(Mn),
\end{split}
\end{equation}
where
\begin{equation}
\arraycolsep = 2pt
\medmuskip=0mu 
\resizebox{1\linewidth}{!}{$
\mathbf{x}_j(Mn) = \begin{bmatrix}
x_j(M(n-P+1)) & x_j(M(n-P+1)+1) & \dots & x_j(Mn)
\end{bmatrix}^T.$}
\end{equation}
Further, the vector $\mathbf{x}_j(Mn)$ can be expressed in terms of the input $u(n)$. We start with an assumption that all the analysis filters are causal and have equal length $Q$. It can be assured by appropriate zero padding. Now we can obtain $\mathbf{x}_j(Mn)$ as
\begin{equation}
\mathbf{x}_j(Mn) = \mathbf{H}_j \mathbf{u}(Mn),
\end{equation}
where $\mathbf{H}_j$ is a convolution matrix. The entire input to the Wiener filter can be expressed as follows:
\begin{equation}
\thickmuskip = 2mu
\medmuskip=0mu
\begin{split}
\mathbf{v}_{\text{s}}(n) = \begin{bmatrix}
\overbar{\mathbf{v}}_0(n) \\
\overbar{\mathbf{v}}_1(n) \\
\vdots \\
\overbar{\mathbf{v}}_{L-1}(n)
\end{bmatrix} &= \begin{bmatrix}
\mathbf{D}^0_{P , M(P-1)+1} \overbar{\mathbf{H}}_0 \\
\mathbf{D}^0_{P , M(P-1)+1} \overbar{\mathbf{H}}_1 \\
\vdots \\
\mathbf{D}^0_{P , M(P-1)+1} \overbar{\mathbf{H}}_{L-1}
\end{bmatrix} \overbar{\mathbf{u}}(Mn), \\ 
&= \mathbf{K} \overbar{\mathbf{u}}(Mn),
\end{split}
\label{eq_Kdef}
\end{equation}
where $\mathbf{K}$ is a matrix of size $LP \times [M(P-1)+Q]$. Using this result, the auto-correlation matrix $\mathbf{R}_{vv}$ can be written as
\begin{equation}
\begin{split}
\mathbf{R}_{vv} &= \mathbf{K} \mathbf{E}[\overbar{\mathbf{u}}(Mn) \overbar{\mathbf{u}}^H(Mn)] \mathbf{K}^H, \\
&= \mathbf{K} \mathbf{R}_{uu} \mathbf{K}^H .
\end{split}
\end{equation}
Similarly the cross-correlation vector $\mathbf{r}_{d_i v}$ becomes
\begin{equation}
\begin{split}
\mathbf{r}_{d_i v} &= \mathbf{E}[u(Mn-i-d) \overbar{\mathbf{u}}^H(Mn)] \mathbf{K}^H, \\
&= \mathbf{r}_{uu}^{i+d} \mathbf{K}^H,
\label{eq_rdiv}
\end{split}
\end{equation}
where $\mathbf{r}_{uu}^{i+d}$ is a row vector given by
\begin{equation}
\arraycolsep=2pt
\medmuskip=0mu
\mathbf{r}_{uu}^{i+d} = \begin{bmatrix}
r_{uu}(-i-d) \\
r_{uu}(-i-d + 1) \\
\vdots \\
r_{uu}(-i - d +M(P - 1) + Q - 1)
\end{bmatrix}^T. 
\end{equation}
By substituting these results in~(\ref{eq_apufb}), we obtain
\begin{equation}
\mathbf{r}_{uu}^{i+d} \mathbf{K}^H = \mathbf{a}_i^T \mathbf{K} \mathbf{R}_{uu} \mathbf{K}^H, \qquad 0 \leq i \leq M-1.
\end{equation}
The resulting equation is of form
\begin{equation}
\mathbf{A} \mathbf{a}_i = \mathbf{b}_i,
\label{eq_aisysofeq}
\end{equation}
where
\begin{equation}
\begin{split}
\mathbf{A} &= \left( \mathbf{K} \mathbf{R}_{uu} \mathbf{K}^H \right) ^T, \\
\mathbf{b}_i &= \left( \mathbf{r}_{uu}^{i+d} \mathbf{K}^H \right) ^T.
\end{split}
\label{eq_Abi_def}
\end{equation}
If $\mathbf{A}$ is invertible, then we have a unique solution. Otherwise, the existence of a solution depends on the nature of the system of equations. If the same is consistent, then all solutions can be expressed as
\begin{equation}
\mathbf{a}_i = \mathbf{A}^\dagger \mathbf{b}_i + \left(\mathbf{I} -  \mathbf{A}^\dagger \mathbf{A} \right) \mathbf{w},
\label{eq_aigen}
\end{equation}
where $\mathbf{w}$ is an arbitrary vector. In the next theorem, we prove that the equation always results in a solution. 

\begin{theorem}
For an FIR uniform analysis bank, there will be at least one optimal synthesis bank solution that minimizes the mean-square error. 
\end{theorem}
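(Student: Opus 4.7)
The plan is to show that the linear system (\ref{eq_aisysofeq}) is consistent for every $i \in \{0, 1, \ldots, M-1\}$; once that is established, the parameterization (\ref{eq_aigen})---for instance the choice $\mathbf{w} = \mathbf{0}$, yielding $\mathbf{a}_i = \mathbf{A}^\dagger \mathbf{b}_i$---supplies a valid $i$-th row of the synthesis filter, and stacking the $M$ rows produces the claimed optimal FIR synthesis bank. The argument is, in essence, the standard existence result for Wiener filtering; the only novelty is ensuring it goes through even when $\mathbf{R}_{vv}$ (and hence $\mathbf{A}$) is singular.

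The central observation I would exploit is that the mean-square error associated with the $i$-th row is a convex quadratic in $\mathbf{a}_i$ that is bounded below by $0$. Expanding the cost and using the definitions in (\ref{eq_Abi_def}) together with the Hermitian structure of $\mathbf{R}_{vv}$, I would rewrite
\begin{equation*}
J_i(\mathbf{a}_i) = E\bigl|d_i(n) - \mathbf{a}_i^T \mathbf{v}_{\text{s}}(n)\bigr|^2 = E|d_i(n)|^2 - 2\,\mathrm{Re}(\mathbf{a}_i^H \mathbf{b}_i) + \mathbf{a}_i^H \mathbf{A}\, \mathbf{a}_i,
\end{equation*}
via the identity $\mathbf{a}_i^T \mathbf{R}_{vv} \mathbf{a}_i^* = \mathbf{a}_i^H \mathbf{R}_{vv}^T \mathbf{a}_i = \mathbf{a}_i^H \mathbf{A} \mathbf{a}_i$. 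The matrix $\mathbf{A} = (\mathbf{K}\mathbf{R}_{uu}\mathbf{K}^H)^T$ is Hermitian and positive semi-definite because $\mathbf{R}_{uu}$ is a genuine autocorrelation matrix and taking the transpose of a Hermitian PSD matrix preserves both properties; hence $J_i$ is a convex quadratic.

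I would then invoke the standard lemma that a convex quadratic of the form $\mathbf{a}^H \mathbf{A} \mathbf{a} - 2\,\mathrm{Re}(\mathbf{a}^H \mathbf{b}) + c$ with positive semi-definite Hermitian Hessian $\mathbf{A}$ is bounded below over all $\mathbf{a}$ if and only if $\mathbf{b} \in \mathrm{range}(\mathbf{A})$. One direction is a direct check: if $\mathbf{b}$ had a nonzero orthogonal component $\mathbf{b}_\perp \in \ker(\mathbf{A})$, taking $\mathbf{a} = t\,\mathbf{b}_\perp$ with real $t \to +\infty$ would drive the linear term to $-\infty$ while the quadratic term stays zero. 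Since $J_i(\mathbf{a}_i) \geq 0$ holds manifestly---it is an expected squared magnitude---this forces $\mathbf{b}_i \in \mathrm{range}(\mathbf{A})$, which is precisely the consistency of (\ref{eq_aisysofeq}) for each $i$, and the theorem then follows from (\ref{eq_aigen}).

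The main piece of care in executing the plan is the bookkeeping of transposes and conjugates in (\ref{eq_Abi_def}), needed to confirm that the cost truly has $\mathbf{A}$ (and not, say, $\mathbf{A}^*$ or $\mathbf{R}_{vv}$ itself) as its Hermitian PSD Hessian; once that identification is made correctly, no further structural hypothesis on the analysis filters, on the decimation factor, or on the delay $d$ is required, which is what makes the existence statement hold in such full generality.
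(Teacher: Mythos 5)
Your proposal is correct, but it reaches the key fact---consistency of $\mathbf{A}\mathbf{a}_i = \mathbf{b}_i$, i.e.\ $\mathbf{b}_i \in \mathrm{range}(\mathbf{A})$---by a genuinely different route than the paper. The paper verifies $\mathbf{A}\mathbf{A}^\dagger\mathbf{b}_i = \mathbf{b}_i$ by direct algebra: it factors $\mathbf{A} = \mathbf{C}\mathbf{C}^H$ with $\mathbf{C} = (\mathbf{K}\mathbf{U}\mathbf{\Lambda}^{1/2})^*$ via the eigendecomposition of $\mathbf{R}_{uu}$, reduces $\mathbf{A}\mathbf{A}^\dagger$ to the projector $\mathbf{C}\mathbf{C}^\dagger$ using pseudoinverse identities, and then shows $\mathbf{b}_i$ lies in the range of $\mathbf{C}$ by splitting into the positive-definite and positive-semi-definite cases; the second case leans on the nontrivial fact (cited to Koch) that $\mathbf{r}_{uu}^{i+d}$ remains a linear combination of the rows of $\mathbf{R}_{uu}$ even when $i+d$ exceeds the matrix dimension. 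Your argument replaces all of this with the observation that $J_i(\mathbf{a}_i) = E|d_i(n)|^2 - 2\,\mathrm{Re}(\mathbf{a}_i^H\mathbf{b}_i) + \mathbf{a}_i^H\mathbf{A}\mathbf{a}_i$ is a nonnegative convex quadratic with Hermitian PSD Hessian $\mathbf{A} = \mathbf{R}_{vv}^T$, so boundedness below forces $\mathbf{b}_i \in \mathrm{range}(\mathbf{A})$ (otherwise the ray $t\,\mathbf{b}_\perp$ drives the cost to $-\infty$); your bookkeeping of transposes and conjugates checks out, and the identification of $\mathbf{b}_i$ with $(\mathbf{r}_{d_i v})^T$ matches~(\ref{eq_Abi_def}). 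What each approach buys: the paper's computation exposes \emph{where} $\mathbf{b}_i$ sits (in the image under $\mathbf{K}^*$ of the row space of $\mathbf{R}_{uu}$), structural information it implicitly reuses elsewhere, at the price of the case split and the row-space lemma for autocorrelation extensions; your variational argument is shorter, needs no hypothesis beyond joint wide-sense stationarity of $d_i(n)$ and $\mathbf{v}_{\text{s}}(n)$, and would survive unchanged under additive observation noise or other modifications of the analysis stage, since it never uses the specific factored form of $\mathbf{A}$ or $\mathbf{b}_i$.
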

\begin{proof}
We have to prove that the solution given by~(\ref{eq_aigen}) is always an exact solution of the equation. Substituting the solution in the equation, we obtain
\begin{equation}
\begin{split}
\mathbf{A} \mathbf{a}_i &= \mathbf{A}\mathbf{A}^\dagger \mathbf{b}_i + \mathbf{A} \left(\mathbf{I} -  \mathbf{A}^\dagger \mathbf{A} \right) \mathbf{w}, \\
&= \mathbf{A}\mathbf{A}^\dagger \mathbf{b}_i.
\end{split}
\end{equation}
The matrix $\mathbf{A}$ can be expressed as
\begin{equation}
\begin{split}
\mathbf{A} &= \left( \mathbf{K} \mathbf{U} \mathbf{\Lambda} \mathbf{U}^H \mathbf{K}^H \right) ^T, \\
&= \left(\mathbf{K} \mathbf{U} \mathbf{\Lambda}^{1/2} \right)^* \left(\mathbf{K} \mathbf{U} \mathbf{\Lambda}^{1/2} \right)^T, \\
&= \mathbf{C} \mathbf{C}^H ,
\end{split}
\end{equation}
by performing eigendecomposition of the auto-correlation matrix $\mathbf{R}_{uu}$. With this,
\begin{equation}
\begin{split}
\mathbf{A} \mathbf{a}_i &=  \mathbf{C} \mathbf{C}^H (\mathbf{C} \mathbf{C}^H)^\dagger \mathbf{b}_i, \\
&= \mathbf{C} \mathbf{C}^H (\mathbf{C}^H)^\dagger \mathbf{C}^\dagger \mathbf{b}_i. \\
\end{split}
\end{equation}
Using the pseudoinverse identities~\citep{jn:ps-tutorial-review}, the equation reduces to 
\begin{equation}
\begin{split}
\mathbf{A} \mathbf{a}_i &= \mathbf{C} \mathbf{C}^\dagger \mathbf{b}_i, \\
&= (\mathbf{K} \mathbf{U} \mathbf{\Lambda}^{1/2} )^* \left( (\mathbf{K} \mathbf{U} \mathbf{\Lambda}^{1/2} )^*\right)^\dagger   \mathbf{K}^* (\mathbf{r}_{uu}^{i+d} )^T.
\end{split}
\end{equation}
The auto-correlation matrix $\mathbf{R}_{uu}$ is in general positive definite but always positive semi-definite. We consider the solution of equation under these two cases:
\begin{enumerate}
\item{Case I: Positive Definite Correlation Matrix}

In this case, the auto-correlation matrix is invertible, so is the diagonal matrix $\mathbf{\Lambda}$. Hence we can process the equation as follows:
\begin{equation}
\medmuskip=0mu
\begin{split}
\mathbf{A} \mathbf{a}_i &= (\mathbf{K} \mathbf{U} \mathbf{\Lambda}^{1/2} )^* ( (\mathbf{K} \mathbf{U} \mathbf{\Lambda}^{1/2} )^*)^\dagger   (\mathbf{K} \mathbf{U} \mathbf{\Lambda}^{1/2} )^*  \\
& \qquad \cdot \left(( \mathbf{U} \mathbf{\Lambda}^{1/2} )^*\right)^{-1} (\mathbf{r}_{uu}^{i+d} )^T, \\
&= (\mathbf{K} \mathbf{U} \mathbf{\Lambda}^{1/2} )^*  \left(( \mathbf{U} \mathbf{\Lambda}^{1/2} )^*\right)^{-1} (\mathbf{r}_{uu}^{i+d} )^T, \\
&= \mathbf{b}_i.
\end{split}
\end{equation}
\item{Case II: Positive Semi-Definite Correlation Matrix}

The diagonal matrix $\mathbf{\Lambda}$ is not invertible under this case. However we exploit the relation between the vector $\mathbf{r}_{uu}^{i+d}$ and the matrix $\mathbf{R}_{uu}$. For $(i+d) < (M(P-1)+Q)$, the vector $\mathbf{r}^{i+d}_{uu}$ is one of  the row of the correlation matrix. If $(i+d) \geq (M(P-1)+Q)$, then the vector $\mathbf{r}^{i+d}_{uu}$ can still be expressed as a linear combination of the rows of the auto-correlation matrix~\citep{bk:koch}. Thus, we can always express the vector $\mathbf{r}^{i+d}_{uu}$ as 
\begin{equation}
\mathbf{r}^{i+d}_{uu} = \sum_j c_j \mathbf{R}_{uu}(j,:) = \sum_j c_j \mathbf{U}(j,:) \mathbf{\Lambda} \mathbf{U}^H,
\end{equation}
where $\mathbf{R}_{uu}(j,:)$ is the $j$-th row of the correlation matrix. With this, 
\begin{equation}
\thickmuskip = 2mu
\begin{split}
\mathbf{A} \mathbf{a}_i &= (\mathbf{K} \mathbf{U} \mathbf{\Lambda}^{1/2} )^* \big( (\mathbf{K} \mathbf{U} \mathbf{\Lambda}^{1/2} )^*\big)^\dagger    (\mathbf{K} \mathbf{U} \mathbf{\Lambda}^{1/2} )^*  \\
& \qquad \cdot \mathbf{\Lambda}^{1/2} \big(\sum_j c_j \mathbf{U}(j,:)  \big)^T ,\\
&= (\mathbf{K} \mathbf{U} \mathbf{\Lambda}^{1/2} )^* \mathbf{\Lambda}^{1/2} \big(\sum_j c_j \mathbf{U}(j,:) \big)^T ,\\
&= \mathbf{b}_i.
\end{split}
\end{equation}
\end{enumerate} 
We can observe that we have 
\begin{equation}
\mathbf{A} \mathbf{a}_i = \mathbf{A}\mathbf{A}^\dagger \mathbf{b}_i = \mathbf{b}_i
\label{eq_thm1res1}
\end{equation}
under both the cases and hence the solution given by~(\ref{eq_aigen}) is exact. 
\end{proof}

From the above theorem, we can conclude that if the matrix $\mathbf{A}$ is non-invertible, we have infinite number of optimum solutions possible. This gives an opportunity to optimize the synthesis bank for a desired property. 

\subsubsection{Computational Complexity}
The computational complexity can be evaluated from various algebraic manipulations required for the solution in~(\ref{eq_aigen}). To start with, the matrix $\mathbf{A}$ evaluation needs $O\big(LP(M(P-1)+Q)^2\big) + O\big((LP)^2\big)$ calculations and its pseudoinverse requires $O\big((LP)^3\big)$ calculations. The term $(\mathbf{I} - \mathbf{A}^\dagger \mathbf{A})$ can then be evaluated with $O\big((LP)^3\big)$ cost. As the vector $\mathbf{b}_i$ needs to be evaluated for each $i \in [0, M-1]$, the total cost of obtaining $\mathbf{b}_i$'s is $O\big(MLP(M(P-1)+Q)\big)$. All the products $\mathbf{A}^\dagger \mathbf{b}_i$ and $(\mathbf{I} - \mathbf{A}^\dagger \mathbf{A})\mathbf{w}$ can then be obtained with complexity $O\big(M(LP)^2\big)$. Therefore, the complexity is quadratic with respect to $M$ and $Q$, and cubic with respect to $L$ and $P$. 

Our synthesis filter is multiple-input multiple-output (MIMO) FIR system, thus enabling the operations at the lowest sampling rate. The MIMO implementation brings down the complexity drastically, thereby for each reconstructed output sample, the number of multiplication and additions required are only $LP$ and $(LP-1)$, respectively.

\begin{figure*}[t]
\centering
\subfloat[Blocking of a subband]{\includegraphics[width = 0.5\linewidth]{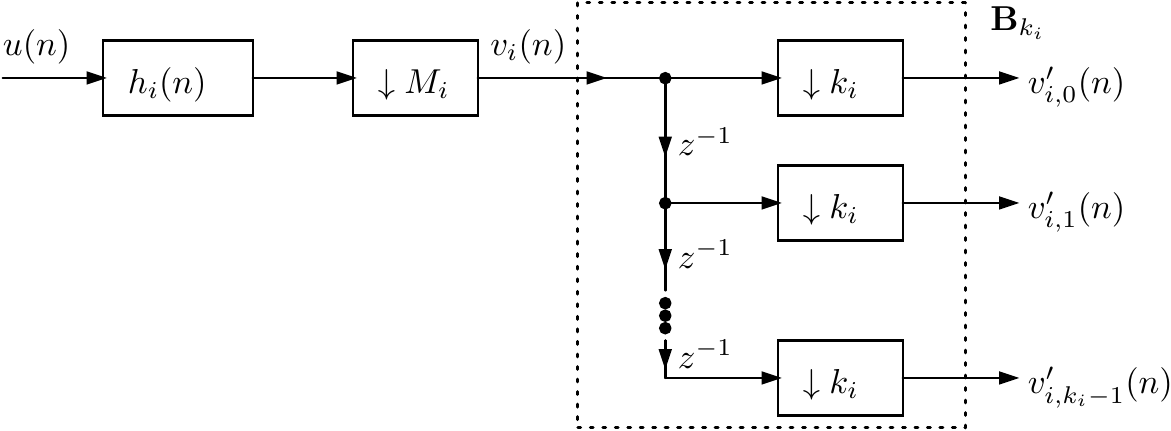}
\label{fig_nufb_ufb_block}} 
\subfloat[Equivalent representation]{\includegraphics[width = 0.38\linewidth]{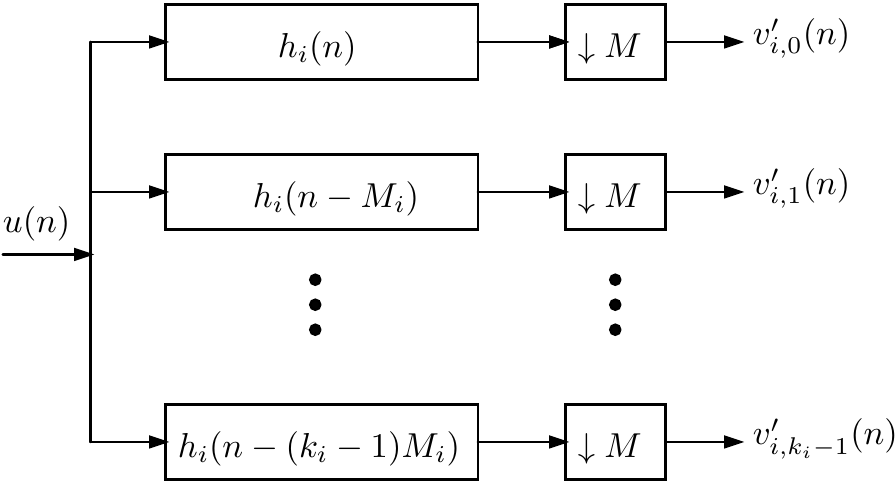}%
\label{fig_nufb_ufb_eqvblock}}
\caption{Conversion of an NUFB into an equivalent UFB}
\label{fig_nufb_to_ufb}
\end{figure*}

\subsection{Extension to NUFB}
The results developed so far for a UFB can be extended for an NUFB case. Before applying Wiener filtering on an NUFB, we convert it into an equivalent UFB with the help of blocking operation. This conversion is needed as Wiener filtering requires joint stationarity of the input and the desired signals~\citep{jn:sanelsevier}. Each subband of the NUFB with a decimation factor $M_i$ is blocked with a factor
\begin{equation}
k_i = M/M_i,\qquad 0 \leq i < L
\end{equation}
where $L$ is the number of subbands and $M$ is the LCM of all decimation factors. The process is shown in Fig.~\ref{fig_nufb_ufb_block}. An equivalent representation of the process, as shown in Fig.~\ref{fig_nufb_ufb_eqvblock}, can be derived using the noble identities~\citep{jn:sanelsevier,bk:vaidyn}. Thus with blocking, the NUFB behaves like a UFB with decimation factor equal to $M$ and with analysis filters given by
\begin{equation}
h'_{i,l}(n) = h_i(n-lM_i), \qquad 0 \leq i < L; \qquad 0 \leq l < k_i.
\end{equation}
Now the Wiener synthesis filter for the NUFB can be obtained using~(\ref{eq_aigen}).

\subsection{Minimum Mean-square Error}
\label{sec_mse}
We now determine the MSE for a given length of the Wiener filter. The result can be used to choose the length of the synthesis filter based on the desired level of the error. The MSE is defined as
\begin{equation}
\begin{split}
J &= \mathbf{E}[\mathbf{e}^H(n) \mathbf{e}(n)] = \sum_{i=0}^{M-1} \mathbf{E}[e_i^*(n) e_i(n)] .
\end{split}
\end{equation}
Its minimum value is achieved using the Wiener filter and is denoted by $J_{\text{min}}$. Further, $J_{\text{min}}$ is the sum of channel-wise MSEs which are given by 
\begin{equation}
\begin{split}
J_{\text{min}}^i &= \mathbf{E}[e_i^*(n) e_i(n)], \\
&= \mathbf{E}[e_i^*(n) d_i(n)] - \mathbf{E}[e_i^*(n) y_i(n)] .
\end{split}
\end{equation}
Using the principle of orthogonality~\cite{bk:haykin}, the above equation reduces to 
\begin{equation}
\begin{split}
J_{\text{min}}^i &= \mathbf{E}[e_i^*(n) d_i(n)], \\
&= \mathbf{E}[d_i^*(n) d_i(n)] - \mathbf{E}[y_i^*(n) d_i(n)], \\
&= r_{uu}(0) - \mathbf{E}[\mathbf{v}_s^H(n)\mathbf{a}_i^* d_i(n)], \\
&= r_{uu}(0) - \mathbf{r}_{d_i v} \mathbf{a}_i^* .
\end{split}
\end{equation}
Substituting the expressions for $\mathbf{r}_{d_i v}$ and $\mathbf{a}_i$ from~(\ref{eq_rdiv}) and~(\ref{eq_aigen}), respectively, we obtain
\begin{equation}
\medmuskip=0mu
\begin{split}
J_{\text{min}}^i &= r_{uu}(0) - \mathbf{b}_i^T \mathbf{A}^{\dagger*} \mathbf{b}_i^* - \left( \mathbf{b}_i - \mathbf{A}^H (\mathbf{A}^\dagger )^H \mathbf{b}_i  \right)^T \mathbf{w}^*, \\
&= r_{uu}(0) - \mathbf{b}_i^T \mathbf{A}^{\dagger*} \mathbf{b}_i^* - \left( \mathbf{b}_i - \mathbf{A} \mathbf{A}^\dagger  \mathbf{b}_i  \right)^T \mathbf{w}^*.
\end{split}
\end{equation}  
The term with the arbitrary vector is zero due to~(\ref{eq_thm1res1}) and the equation reduces to 
\begin{equation}
\begin{split}
J_{\text{min}}^i &= r_{uu}(0) - \mathbf{b}_i^T \mathbf{A}^{\dagger*} \mathbf{b}_i^*.
\label{eq_jmin_ch}
\end{split}
\end{equation}
This proves that all the solutions for the synthesis bank results in the same MSE, which is intuitive. 

\subsubsection{Optimal Delay}
The channel MSE $J_{\text{min}}^i$ depends on the delay $d$. The same is also observed in our simulation results. The exact form of the relation can be obtained from~(\ref{eq_Abi_def}) as
\begin{equation}
\begin{split}
J_{\text{min}}^i &= r_{uu}(0) - \mathbf{r}_{uu}^{i+d} \mathbf{K}^H  \left( \mathbf{K} \mathbf{R}_{uu} \mathbf{K}^H \right) ^\dagger \mathbf{K} \left(\mathbf{r}_{uu}^{i+d}\right)^H \\
&= r_{uu}(0) - \mathbf{r}_{uu}^{i+d} \mathbf{B} \left(\mathbf{r}_{uu}^{i+d}\right)^H
\end{split}
\end{equation}
$J_{\text{min}}^i$ can be observed as a quadratic form with the matrix $\mathbf{B}$ being a Hermitian matrix. If we are interested to obtain an optimal delay value, we can use the existing results~\citep{bk:moon2000mathematical,jn:quad_opt} on the quadratic forms.      

In the following section, we derive the necessary and sufficient conditions on a filter bank for the MSE to be zero. 

\section{FIR Perfect Reconstruction Filter Bank}
\label{sec_firpr}
In the following analysis, we prove that the designed synthesis bank can achieve PR provided that the analysis bank satisfies certain condition. To justify PR property for any arbitrary input signal on a strong analytical background, we verify the pseudocirculant property~\citep{bk:vaidyn,jn:vaidypoly} for the solution. 

\subsection{Pseudocirculant Property}
We evaluate the product $\mathbf{P}(z) = \mathbf{A}(z)\mathbf{E}(z)$ in time-domain where $\mathbf{A}(z)$ and $\mathbf{E}(z)$ are the polyphase component matrices for the synthesis bank and the analysis bank, respectively~\citep{jn:vaidypoly}. The matrix $\textbf{P}(z)$ is designed as an identity matrix in previous works~\citep{jn:duval,jn:Cvetkovic}. Our design approach is more general where $\textbf{P}(z)$ satisfies the pseudocirculant property. In time-domain, we obtain the matrix $\textbf{P}(z)$ from convolution of the two matrix filters as
\begin{equation}
p_{i,j}(n) = \sum_{r=0}^{L-1} a_{i,r}(n) \ast e_{r,j}(n), \qquad 0 \leq i,j \leq M-1 .
\label{eq_pijt1}
\end{equation}
In the above equation, the filter $e_{r,j}(n)$ can be obtained from the analysis filters as~\citep{bk:vaidyn}
\begin{equation}
e_{r,j}(n) = h_r(Mn+j).
\end{equation}
As $Q$ is the length of the analysis filters, the length of the filter $e_{r,j}(n)$ will be at max
\begin{equation}
T = \left\lceil \frac{Q}{M}\right\rceil.
\end{equation}
By zero padding, we ensure that all the constituent filters of the polyphase filter $\mathbf{E}(n)$ have the same length $T$. With that, the convolution given in~(\ref{eq_pijt1}) can be expressed in matrix form using~(\ref{eq_conv_form2}) as
\begin{equation}
\mathbf{p}_{i,j} = \sum_{r=0}^{L-1} \tilde{\mathbf{E}}_{r,j} \mathbf{a}_{i,r} ,
\label{eq_pijt2}
\end{equation}
where $\tilde{\mathbf{E}}_{r,j}$ is a convolution matrix of size $(P+T-1) \times P$. We like to express this convolution matrix in terms of the analysis filter $h_r(n)$. For that, we use a polyphase identity, as shown in Fig.~\ref{fig_poly_indentity}, obtained from the existing result~\citep{bk:vaidyn} by finding the $0$-th polyphase component of a time-advanced filter $h_r(n+j)$.
\begin{figure*}[t]
\centering
\subfloat[A polyphase component]{\includegraphics{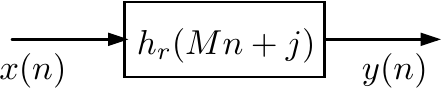}%
\label{fig_poly_indentity1} }
\qquad \subfloat[Equivalent multirate circuit]{\includegraphics{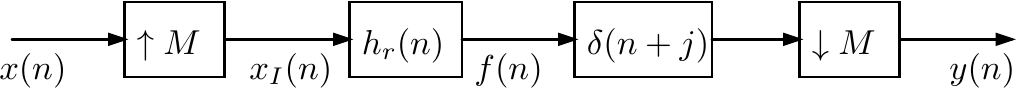}%
\label{fig_poly_indentity2}}
\caption{The polyphase identity.}
\label{fig_poly_indentity}
\vspace{-12pt}
\end{figure*}
Using this identity, we obtain the following relation
\begin{equation}
\begin{split}
\tilde{\mathbf{E}}_{r,j} &= \mathbf{D}_{P+T-1 , M(P-1)+Q}^j \tilde{\mathbf{H}}_r (\mathbf{D}_{P , M(P-1)+1}^0)^T, \\
&= \mathbf{D}_{P+T-1 , M(P-1)+Q}^j \overbar{\mathbf{H}}_r^T (\mathbf{D}_{P , M(P-1)+1}^0)^T, \\
&= \mathbf{D}_{P+T-1 , M(P-1)+Q}^j \left( \mathbf{D}_{P , M(P-1)+1}^0 \overbar{\mathbf{H}}_r \right)^T .
\end{split}
\end{equation}
It can be seen that Fig.~\ref{fig_poly_indentity2} contains a combination of upsampling, filtering, time-advance and downsampling. Even then the matrix representations developed in this paper for various multirate operations have been successfully applied to deal with it. Continuing forward, we use the above relation in~(\ref{eq_pijt2}) and obtain
\begin{align}
\mathbf{p}_{i,j} &= \mathbf{D}_{P+T-1 , M(P-1)+Q}^j \sum_{r=0}^{L-1} \left( \mathbf{D}_{P , M(P-1)+1}^0 \overbar{\mathbf{H}}_r \right)^T \mathbf{a}_{i,r}, \notag \\
&= \mathbf{D}_{P+T-1 , M(P-1)+Q}^j \mathbf{K}^T \mathbf{a}_i .
\label{eq_pijtimed}
\end{align}
Next we derive a necessary and sufficient condition, in the form of a theorem, for the analysis and synthesis banks of a filter bank to result in PR.

\begin{theorem}
An FIR uniform filter bank is PR if and only if
\begin{equation}
\mathbf{K}^T \mathbf{a}_i = c \mathbf{e}_{i+d},
\end{equation}
where $c$ is a non-zero constant and the column vector  $\mathbf{e}_{i+d}$ is defined as follows:
\begin{equation}
{e}_{i+d} (l) = \begin{cases}
1, \qquad &\text{if } l = i+d \\
0, \qquad &\text{otherwise.}
\end{cases} 
\end{equation}
\end{theorem}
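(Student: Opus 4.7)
The plan is to prove the equivalence by matching the time-domain expression in~(\ref{eq_pijtimed}) for $\mathbf{p}_{i,j}$ against the pseudocirculant structure that characterizes PR with delay $d$. I would first recall the standard pseudocirculant characterization of PR: writing $d = d_0 M + d_1$ with $0 \le d_1 < M$, the filter bank reconstructs $\hat u(n) = c\, u(n-d)$ if and only if the polyphase product $\mathbf{P}(z) = \mathbf{A}(z)\mathbf{E}(z)$ has entries $P_{i,j}(z) = c\, z^{-d_0}$ whenever $j = i + d_1$, entries $P_{i,j}(z) = c\, z^{-(d_0+1)}$ whenever $j = i + d_1 - M$, and zero elsewhere.

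For the sufficiency direction, I would substitute $\mathbf{K}^T \mathbf{a}_i = c\, \mathbf{e}_{i+d}$ directly into~(\ref{eq_pijtimed}), giving $\mathbf{p}_{i,j} = c\, \mathbf{D}^j_{P+T-1,M(P-1)+Q}\, \mathbf{e}_{i+d}$. By the definition of the decimation matrix, this vector has a single $1$ at row $k$ exactly when $Mk + j = i + d$, i.e.\ $k = (i+d-j)/M$; for $0 \le i,j \le M-1$ the only admissible solutions are $(k,j) = (d_0,\, i+d_1)$ when $i+d_1 < M$ and $(k,j) = (d_0+1,\, i+d_1-M)$ when $i+d_1 \ge M$. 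Taking $z$-transforms recovers exactly the pseudocirculant described above, establishing PR. For the necessity direction, as $j$ ranges over $\{0, 1, \dots, M-1\}$ the decimation matrices $\mathbf{D}^j$ partition and cover every coordinate of the length-$(M(P-1)+Q)$ vector $\mathbf{K}^T \mathbf{a}_i$, so that their joint vertical stacking yields an injective linear map onto a permutation of the full vector. Hence the collection $\{\mathbf{p}_{i,j}\}_{j=0}^{M-1}$ uniquely determines $\mathbf{K}^T \mathbf{a}_i$, and the unique preimage of the pseudocirculant entries is the standard basis vector $c\, \mathbf{e}_{i+d}$.

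The main obstacle will be the index bookkeeping around the split $d = d_0 M + d_1$ and the accompanying two subcases, which must be tracked carefully to confirm that the wrapped-around delay $z^{-(d_0+1)}$ appears on exactly the correct off-diagonal strip of the pseudocirculant and never collides with the $z^{-d_0}$ strip. A cleaner shortcut I might take is to bypass the polyphase detour altogether: the scalar input-output relation $y_i(n) = \mathbf{a}_i^T \mathbf{v}_s(n) = (\mathbf{K}^T \mathbf{a}_i)^T \bar{\mathbf{u}}(Mn)$, combined with the PR requirement that $y_i(n) = c\, u(Mn - i - d)$ for every admissible input $u(n)$, forces $\mathbf{K}^T \mathbf{a}_i = c\, \mathbf{e}_{i+d}$ immediately by equating coefficients of $u(Mn - k)$; the pseudocirculant property would then follow as a corollary rather than serve as the hypothesis.
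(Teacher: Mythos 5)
Your proposal is correct, and it differs from the paper's argument in a way worth noting. The paper works entirely inside the polyphase picture: it sets $\mathbf{s}_i=\mathbf{K}^T\mathbf{a}_i$, writes $\mathbf{p}_{i,j}$ as the $M$-fold decimation of $\mathbf{s}_i$ with offset $j$, and then enumerates five time-domain properties that a generalized pseudocirculant must satisfy (one non-zero entry per row, per column, per vector; equality along diagonals; the shifted-and-wrapped first column), deducing from them that each $\mathbf{s}_i$ is a shifted copy of a single scaled unit vector $c\,\mathbf{e}_{d_0}$ and only at the end identifying $d_0$ with $d$. Your necessity argument reaches the same conclusion more directly by observing that the decimation offsets $j=0,\dots,M-1$ partition the coordinates of $\mathbf{s}_i$, so the pseudocirculant entries determine $\mathbf{s}_i$ uniquely; this collapses the paper's Properties 1--3 into one covering statement and replaces Properties 4--5 by reading the row index $i$ off the explicit pseudocirculant entries. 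You also supply the sufficiency direction by substitution, which the paper never spells out even though the theorem is stated as an equivalence. Your proposed shortcut --- equating coefficients in $y_i(n)=(\mathbf{K}^T\mathbf{a}_i)^T\overbar{\mathbf{u}}(Mn)=c\,u(Mn-i-d)$ for all inputs --- is more elementary still and makes the pseudocirculant structure a corollary rather than the engine of the proof; its only cost is that it presumes PR is defined with the reconstruction delay tied to the designed $d$, which is exactly the identification the paper itself waves through with ``one can easily see.'' The one place to be careful is the overall delay convention: the paper's reconstruction is $\hat{u}(n)=c\,u(n-d_0-M+1)$, with the extra $M-1$ coming from the blocking/unblocking convention, so your statement $\hat{u}(n)=c\,u(n-d)$ is off by that constant; this is the index bookkeeping you already flagged and does not affect the substance of the argument.
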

\begin{proof}
The proof follows the Vaidyanathan's well-known pseudocirculant property. A PR filter bank must satisfies the generalized pseudocirculant condition given by~\citep{bk:vaidyn}
\begin{equation}
\mathbf{P}(z) = c z^{-n_0} \begin{bmatrix}
\mathbf{0} & \mathbf{I}_{M-r} \\
z^{-1} \mathbf{I}_r & \mathbf{0}
\end{bmatrix}.
\end{equation}
We have evaluated $\mathbf{P}(z)$ in time-domain and next we find conditions for it to be pseudocirculant. Let $\mathbf{K}^T \mathbf{a}_i = \mathbf{s}_i$, then~(\ref{eq_pijtimed}) becomes
\begin{equation}
\medmuskip=0mu
\arraycolsep=2pt
\mathbf{p}_{i,j} = \begin{bmatrix}
s_i(j) & s_i(j+M) & \dots & s_i(j+M(P+T-2))
\end{bmatrix}^T.
\end{equation}
If the matrix filter $\mathbf{P}(z)$ is pseudocirculant, then the following properties are true in time-domain:
\begin{enumerate}
\item Property 1: In a row, only one $\mathbf{p}_{i,j}$ is non-zero i.e. only one out of $\mathbf{p}_{i,0}, \allowbreak \mathbf{p}_{i,1}, \allowbreak \dots, \allowbreak \mathbf{p}_{i,M-1}$ is non-zero.
\item Property 2: In a column, only one $\mathbf{p}_{i,j}$ is non-zero i.e. only one out of $\mathbf{p}_{0,j}, \allowbreak \mathbf{p}_{1,j}, \allowbreak \dots, \allowbreak \mathbf{p}_{M-1,j}$ is non-zero.
\item Property 3: For a vector $\mathbf{p}_{i,j}$, at most one entry is non-zero.
\item Property 4: All the entries on upper-left to lower-right diagonals are the same i.e.
\begin{equation}
\mathbf{p}_{i+1,j+1} = \mathbf{p}_{i,j}, \qquad 0 \leq i,j \leq M-2. 
\end{equation}
\item Property 5: The first column will have 
\begin{equation}
\medmuskip=0mu
\begin{split}
p_{i+1,0}(l) &= \begin{cases}
0, \qquad & \text{if } l = 0 \\
p_{i,M-1}(l-1), \qquad & \text{otherwise}
\end{cases}, \\
& \qquad 0 \leq i \leq M-2.
\end{split}
\end{equation}
\end{enumerate}
The above properties will result in the following constraints on the vector $\mathbf{s}_i$:
\begin{enumerate} 
\item Property 1 implies that only one vector 
\begin{equation}
\thickmuskip = 2mu
\medmuskip=0mu
\arraycolsep=1pt
\begin{split}
\mathbf{p}_{i,j} &= 
\begin{bmatrix}
s_i(j) & s_i(j+M) & \dots & s_i(j+M(P+T-2)) 
\end{bmatrix}^T, \\
& \qquad 0 \leq j < M
\end{split}
\end{equation}
is non-zero in a row. Further, Property 3 means that only one element of the non-zero vector is non-zero. Together these two conditions imply that only one element of the vector $\mathbf{s}_i$ is non-zero. Assume that $\mathbf{p}_{i,j} \neq 0$ for $j = j_i$ and
${p}_{i,j_i}(l_i) = s_i(j_i + Ml_i) \neq 0$, then we can express
\begin{equation}
\mathbf{s}_i = c_i \mathbf{e}_{j_i + Ml_i}.
\label{eq_si0}
\end{equation}
\item Property 4 results in
\begin{equation}
\thickmuskip = 2mu
\medmuskip=0mu
\begin{split}
s_{i+1}(Ml+j+1) &= s_i(Ml+j), \qquad 0 \leq j \leq M-2; \\
& \qquad 0 \leq l \leq P+T-2.
\end{split}
\label{eq_si1}
\end{equation}
From the above equation, all the elements of the vector $\mathbf{s}_{i+1}$, except at indices that are multiples of $M$, 
can be obtained from the vector $\mathbf{s}_i$.
\item Property 5 implies 
\begin{equation}
p_{i+1,0}(0) = s_{i+1}(0)  = 0, \qquad 0 \leq i \leq M-2.
\end{equation}
Another implication is that 
\begin{equation}
s_{i+1}(Ml) = s_i(Ml-1). 
\label{eq_si2}
\end{equation}
Combining the two results~(\ref{eq_si1}) and~(\ref{eq_si2}), we get
\begin{equation}
s_{i+1}(l) = \begin{cases}
0, & \qquad l = 0, \\
s_i(l-1), & \qquad \text{otherwise.}
\end{cases}
\end{equation}
Thus the vector $\mathbf{s}_{i+1}$ is just a shifted version of the vector $\mathbf{s}_{i}$. 
\end{enumerate}
The last constraint implies that all the vectors $\mathbf{s}_i$ can be obtained from a single vector  $\mathbf{s}_0$. If the vector $\mathbf{s}_0$ has form  
\begin{equation}
\mathbf{s}_0 = c \mathbf{e}_{Ml_0 + j_0} = c \mathbf{e}_{d_0},
\end{equation} 
then all the vectors $\mathbf{s}_i$ can be expressed as
\begin{equation}
\mathbf{s}_i = c \mathbf{e}_{i+d_0}, \qquad 0 \leq i \leq M-1.
\end{equation}
Due to the above equation, we have
\begin{equation}
\mathbf{p}_{i,j} = c \mathbf{D}_{P+T-1 , M(P-1)+Q}^j \mathbf{e}_{i+d_0} .
\end{equation}
In z-domain, this means that
\begin{equation}
\mathbf{P}(z) = c z^{-l_0} \begin{bmatrix}
\mathbf{0} & \mathbf{I}_{M-j_0} \\
z^{-1} \mathbf{I}_{j_0} & \mathbf{0}
\end{bmatrix}
\end{equation}
and the reconstructed output is $\hat{u}(n) = cu(n-d_0-M+1)$. One can easily see that the variable $d_0$ is the same as the delay $d$ in the desired signal path.
\end{proof}
It is to be noted that the above theorem puts restriction on both the analysis and synthesis banks. It may happen that a solution may not exist for the synthesis bank to result in PR. We will first obtain a parameterization of all PR solutions for the synthesis bank and later derive necessary \& sufficient conditions for the same to exist. 

\subsection{Parametric Form for PR Solutions}
In the previous subsection, we obtained the following condition on the analysis and synthesis banks to have PR:
\begin{equation}
\mathbf{K}^T \mathbf{a}_i = c \mathbf{e}_{i+d}, \qquad 0 \leq i < M.
\end{equation}
The above equation is of form 
\begin{equation}
\mathbf{A} \mathbf{a}_i = \mathbf{b}_i
\end{equation}
and thus represents a system of equations. If the same is consistent, then all the PR solutions for the synthesis bank are given by
\begin{equation}
\begin{split}
\mathbf{a}_i &= \mathbf{A}^\dagger \mathbf{b}_i + (\mathbf{I} - \mathbf{A}^\dagger \mathbf{A})\mathbf{w}, \\
 &= c(\mathbf{K}^\dagger)^T \mathbf{e}_{i+d} + (\mathbf{I} - \mathbf{K} \mathbf{K}^\dagger)^T \mathbf{w},
\end{split}
\label{eq_genPR}
\end{equation}
where $\mathbf{w}$ is an arbitrary vector. We can observe that the solutions have turned out to be independent of the input signal. 

\subsection{Necessary and Sufficient Conditions on Analysis Bank and Delay}
In order for a PR solution to exist, the analysis bank has to satisfy certain condition. We present the same in the form of a theorem about the matrix $\mathbf{K}$ which represents the analysis bank.
\begin{theorem} \label{thm_analysis_pr}
For an FIR uniform analysis bank, at least one PR synthesis bank solution is possible if the matrix $\mathbf{K}$ satisfies the relation
\begin{equation}
\medmuskip=0mu
\thickmuskip = 0mu
\arraycolsep=1pt
\resizebox{0.88\linewidth}{!}{$
\mathbf{K}^\dagger \mathbf{K} = \mathbf{I} + \begin{bmatrix}
\mathbf{B}_{p \times p} & \mathbf{0}_{p \times r} & \mathbf{C}_{p \times (q-p-r)} \\
\mathbf{0}_{r \times p} & \mathbf{0}_{r \times r} & \mathbf{0}_{r \times (q-p-r)} \\
\mathbf{C}^H_{p \times (q-p-r)} & \mathbf{0}_{(q-p-r) \times r} & \mathbf{D}_{(q-p-r) \times (q-p-r)}  
\end{bmatrix},$}
\label{eq_k_pr_cond}
\end{equation}
where $\mathbf{0}_{r \times r}$ is the largest zero sub-block possible with $r \geq M$, and $\mathbf{B}_{p \times p}$, $\mathbf{D}_{(q-p-r) \times (q-p-r)}$ and $\mathbf{C}_{p \times (q-p-r)}$ are arbitrary matrices with the first two being Hermitian.
\end{theorem}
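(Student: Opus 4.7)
The plan is to reduce the existence of a PR synthesis bank to a linear-algebra consistency test for a system of the form $\mathbf{A}\mathbf{a}_i=\mathbf{b}_i$, and then to unpack that test into the claimed block structure of $\mathbf{K}^\dagger\mathbf{K}$.

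First I would invoke Theorem 2, which characterises PR by the $M$ linear systems $\mathbf{K}^T\mathbf{a}_i=c\,\mathbf{e}_{i+d}$ for $0\le i\le M-1$. So a PR synthesis bank exists if and only if each of these right-hand sides lies in the column space of $\mathbf{K}^T$. The standard criterion is that $\mathbf{A}\mathbf{x}=\mathbf{b}$ is consistent if and only if $\mathbf{A}\mathbf{A}^\dagger\mathbf{b}=\mathbf{b}$. Applying this with $\mathbf{A}=\mathbf{K}^T$ and using $(\mathbf{K}^T)^\dagger=(\mathbf{K}^\dagger)^T$ together with the fact that $\mathbf{K}^\dagger\mathbf{K}$ is a Hermitian orthogonal projector, the existence condition collapses to
\[
\mathbf{K}^\dagger\mathbf{K}\,\mathbf{e}_{i+d}\;=\;\mathbf{e}_{i+d},\qquad 0\le i\le M-1.
\]
Thus each of the $M$ consecutive columns of $\mathbf{K}^\dagger\mathbf{K}$ indexed by $d,d+1,\ldots,d+M-1$ coincides with the corresponding column of the identity; by Hermitian symmetry the same is true of the rows with those indices.

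Next I would read off the block form. Setting $\mathbf{E}:=\mathbf{K}^\dagger\mathbf{K}-\mathbf{I}$, the previous observation says that every entry of $\mathbf{E}$ in a row or column indexed by $\{d,\ldots,d+M-1\}$ is zero. Partitioning the index set $\{0,\ldots,q-1\}$ into the $p=d$ coordinates before the band, the $r\ge M$ coordinates forming the zero band, and the remaining $q-p-r$ coordinates after it, the matrix $\mathbf{E}$ decomposes exactly into the $3\times 3$ pattern stated: the central row- and column-strips vanish, while the four corner blocks are arbitrary subject only to the Hermitian constraints $\mathbf{B}=\mathbf{B}^H$, $\mathbf{D}=\mathbf{D}^H$, and the lower-left block equalling $\mathbf{C}^H$. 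The inequality $r\ge M$ reflects that the zero band must contain at least the $M$ forced coordinates; if additional identity rows/columns happen to appear in $\mathbf{K}^\dagger\mathbf{K}$, they extend the band, which is why one records the \emph{largest} zero sub-block.

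The main obstacle I anticipate is not any single calculation but rather the bookkeeping that makes this canonical placement of the zero band in the middle of the matrix legitimate. Because the forced coordinates form the contiguous range $d,\ldots,d+M-1$, the partition $(p,r,q-p-r)$ is a genuine three-way split of the original index set rather than a permutation, and the delay $d$ cleanly determines the offset $p$. Once this correspondence is pinned down, the rest of the argument is the short sequence of pseudoinverse identities used to obtain the consistency condition, together with the simple observation that the PR requirement constrains only the central strip, leaving $\mathbf{B},\mathbf{C},\mathbf{D}$ entirely free—precisely the form stated in the theorem.
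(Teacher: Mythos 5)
Your proposal is correct and follows essentially the same route as the paper: both reduce PR existence to consistency of $\mathbf{K}^T\mathbf{a}_i = c\,\mathbf{e}_{i+d}$ (your use of the criterion $\mathbf{A}\mathbf{A}^\dagger\mathbf{b}=\mathbf{b}$ is equivalent to the paper's substitution of the parametric solution together with $\mathbf{K}\mathbf{K}^\dagger\mathbf{K}=\mathbf{K}$), arrive at $\bigl((\mathbf{K}^\dagger\mathbf{K})^T-\mathbf{I}\bigr)\mathbf{e}_{i+d}=\mathbf{0}$, and then use the Hermitian projector structure of $\mathbf{K}^\dagger\mathbf{K}$ to force the central zero row/column band and the stated corner blocks. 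Your remark that the forced band covers indices $d,\dots,d+M-1$ and is then enlarged to the largest zero sub-block (so that in general $p\le d$ rather than $p=d$) matches how the paper separates the condition on $\mathbf{K}$ from the admissible delay range in the subsequent lemma.
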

\begin{proof}   
We require the system of linear equations given by 
\begin{equation}
\mathbf{K}^T \mathbf{a}_i = c \mathbf{e}_{i+d}
\end{equation}
to be consistent to have an exact solution. In order to obtain the condition for that, we substitute a PR solution in the above equation and obtain 
\begin{equation}
\begin{split}
c \left( \mathbf{K}^\dagger \mathbf{K} \right)^T \mathbf{e}_{i+d} + \left( \mathbf{K} - \mathbf{K} \mathbf{K}^\dagger \mathbf{K} \right)^T \mathbf{w} &= c \mathbf{e}_{i+d}. \\
\end{split}
\end{equation}
Using the fact that $\mathbf{K} \mathbf{K}^\dagger \mathbf{K} = \mathbf{K}$, the equation simplifies to
\begin{equation}
\begin{split}
\left( \left( \mathbf{K}^\dagger \mathbf{K}\right)^T - \mathbf{I}\right) \mathbf{e}_{i+d} &= \mathbf{0}, \qquad 0 \leq i \leq M-1 .
\end{split}
\label{eq_pr_ab_temp1}
\end{equation}
If the matrix $\mathbf{K}^\dagger \mathbf{K}$ is of dimension $q \times q $, then in order for the above equation to be satisfied, we require
\begin{equation}
\left(\mathbf{K}^\dagger \mathbf{K} \right)^H = \mathbf{I} + \begin{bmatrix}
\mathbf{X}_{q \times p} & \mathbf{0}_{q \times r} & \mathbf{Y}_{q \times (q-p-r)} 
\end{bmatrix},
\label{eq_pr_ab_temp2} 
\end{equation}
where $p \leq d$, $r \geq M$, and $\mathbf{X}_{q \times p}$ and $\mathbf{Y}_{q \times (q-p-r)}$ are arbitrary matrices of dimensions $ q \times p$ and $q \times (q-p-r)$, respectively. As $\mathbf{K}^\dagger \mathbf{K}$ is a Hermitian matrix, we have
\begin{equation}
\begin{bmatrix}
\mathbf{X}_{q \times p} & \mathbf{0}_{q \times r} & \mathbf{Y}_{q \times (q-p-r)} 
\end{bmatrix} = \begin{bmatrix}
\mathbf{X}^H_{q \times p} \\
\mathbf{0}_{r \times q} \\
\mathbf{Y}^H_{q \times (q-p-r)} 
\end{bmatrix}.
\end{equation}
Thus the two matrices $\mathbf{X}_{q \times p}$ and $\mathbf{Y}_{q \times (q-p-r)}$ have following form:
\begin{equation}
\mathbf{X}_{q \times p} = \begin{bmatrix}
\mathbf{B}_{p \times p} \\
\mathbf{0}_{r \times p} \\
\mathbf{C}^H_{p \times (q-p-r)}
\end{bmatrix}
\end{equation}
and 
\begin{equation}
\mathbf{Y}_{q \times (q-p-r)} = \begin{bmatrix}
\mathbf{C}_{p \times (q-p-r)} \\
\mathbf{0}_{r \times (q-p-r)} \\
\mathbf{D}_{(q-p-r) \times (q-p-r)}
\end{bmatrix},
\end{equation}
where matrices $\mathbf{B}_{p \times p}$ and $\mathbf{D}_{(q-p-r) \times (q-p-r)}$ are Hermitian matrices. Hence the condition on the matrix $\mathbf{K}$ becomes
\begin{equation}
\arraycolsep=1pt
\thickmuskip = 0mu
\medmuskip=0mu
\resizebox{0.88\linewidth}{!}{$
\mathbf{K}^\dagger \mathbf{K} = \mathbf{I} + \begin{bmatrix}
\mathbf{B}_{p \times p} & \mathbf{0}_{p \times r} & \mathbf{C}_{p \times (q-p-r)} \\
\mathbf{0}_{r \times p} & \mathbf{0}_{r \times r} & \mathbf{0}_{r \times (q-p-r)} \\
\mathbf{C}^H_{p \times (q-p-r)} & \mathbf{0}_{(q-p-r) \times r} & \mathbf{D}_{(q-p-r) \times (q-p-r)}  
\end{bmatrix}.$}
\end{equation}
\end{proof}
The PR requirement imposes the following condition on the desired signal delay $d$. 
\begin{lemma}
With an analysis bank matrix $\mathbf{K}$ satisfying the above condition, PR is possible only when the delay $d$ is in the range: 
\begin{equation}
p \leq d \leq p+r-M.
\label{eq_delay_prcondition}
\end{equation}
\end{lemma}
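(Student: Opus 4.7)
The plan is to return to the consistency condition derived in the proof of Theorem~\ref{thm_analysis_pr}, namely~(\ref{eq_pr_ab_temp1}):
\begin{equation*}
\bigl( (\mathbf{K}^\dagger \mathbf{K})^T - \mathbf{I} \bigr) \mathbf{e}_{i+d} = \mathbf{0}, \qquad 0 \leq i \leq M-1,
\end{equation*}
and read off what it says about the index $d$ once the block structure of $\mathbf{K}^\dagger \mathbf{K}$ from~(\ref{eq_k_pr_cond}) is imposed. Since $\mathbf{e}_{i+d}$ picks out the $(i+d)$-th column of $(\mathbf{K}^\dagger \mathbf{K})^T - \mathbf{I}$, and this matrix is Hermitian (so its columns being zero is equivalent to the corresponding rows of $\mathbf{K}^\dagger \mathbf{K} - \mathbf{I}$ being zero), the equation is satisfied precisely when rows $d, d+1, \ldots, d+M-1$ of $\mathbf{K}^\dagger \mathbf{K} - \mathbf{I}$ are identically zero.

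Next I would inspect~(\ref{eq_k_pr_cond}) row by row. The matrix $\mathbf{K}^\dagger \mathbf{K} - \mathbf{I}$ is the additive correction appearing on the right-hand side of~(\ref{eq_k_pr_cond}). By construction, the only block-row that is identically zero is the middle one of height $r$, whose rows are indexed (in $0$-based notation) by $p, p+1, \ldots, p+r-1$; the top $p$ rows carry the blocks $\mathbf{B}$ and $\mathbf{C}$, and the bottom $q-p-r$ rows carry $\mathbf{C}^H$ and $\mathbf{D}$, so neither can be guaranteed zero. Hence the set of zero rows is exactly $\{p, p+1, \ldots, p+r-1\}$.

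The requirement then becomes the containment
\begin{equation*}
\{d, d+1, \ldots, d+M-1\} \subseteq \{p, p+1, \ldots, p+r-1\},
\end{equation*}
which amounts to the two endpoint inequalities $d \geq p$ and $d+M-1 \leq p+r-1$. Rearranging the second gives $d \leq p+r-M$, and together these yield the claimed range~(\ref{eq_delay_prcondition}). Note that the hypothesis $r \geq M$ from Theorem~\ref{thm_analysis_pr} is exactly what guarantees the interval is non-empty, so the bookkeeping is self-consistent.

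The argument is essentially an indexing exercise rather than a deep calculation, so there is no serious obstacle; the only care needed is to match conventions (zero- versus one-based indexing of the rows, and transposing between the column selection $(\mathbf{K}^\dagger\mathbf{K})^T\mathbf{e}_{i+d}$ and the row description of the zero block in~(\ref{eq_k_pr_cond})). The non-trivial structural content was already established in Theorem~\ref{thm_analysis_pr}; this lemma merely extracts the delay constraint that the zero block of size $r \geq M$ must be able to contain the window of $M$ consecutive indices starting at $d$.
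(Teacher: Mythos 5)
Your argument is correct and follows the same route the paper intends: the paper's proof consists of the single remark that the lemma ``easily follows'' from~(\ref{eq_pr_ab_temp1}) and~(\ref{eq_pr_ab_temp2}), and what you have written is precisely the omitted bookkeeping --- the window of columns $\{d,\dots,d+M-1\}$ selected by the consistency condition must sit inside the guaranteed zero block of width $r$ starting at index $p$, giving $p \leq d$ and $d+M-1 \leq p+r-1$. Your handling of the transpose/Hermitian conventions and the remark that $r \geq M$ makes the interval non-empty are both consistent with the theorem's hypotheses.
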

\begin{proof}
The proof easily follows from the equations~(\ref{eq_pr_ab_temp1}) and~(\ref{eq_pr_ab_temp2}). 
\end{proof}
The above lemma together with Theorem~\ref{thm_analysis_pr} constitute the necessary and sufficient conditions for a PR synthesis solution to exist. The lemma points that with suitable delay, PR is achievable for an analysis bank which is otherwise not possible. This way we have attempted to expand the space of PR FIR filter banks. Further, we can have PR for a number of delay values which gives an opportunity to use delay as an optimization parameter, based on various design criterion like Hermitian symmetry, time or frequency localization~\citep{jn:duval}, coding gain~\citep{jn:labeau}, etc. For this, the number of channels required is $L > M$ and the channels should satisfy the PR condition given by Theorem~\ref{thm_analysis_pr}. For a fixed delay, it is an unconstrained problem with respect to the arbitrary vector $\mathbf{w}$~\citep{jn:duval,jn:bolcskeiframe}. The required optimized solution can  be selected among the solutions for various delays by imposing an additional constraint. Design specific constraints and their solutions can be considered for future work. 

It can be seen that the matrix $\mathbf{K}$ easily satisfy the PR condition if it has full column rank. The PR condition induces certain restriction on its null space which is given by the following lemma. 
\begin{lemma}
For a PR UFB, the analysis bank matrix $\mathbf{K}$ has a null space spanned by the following set of vectors:
\begin{equation}
\mathbf{z} = \begin{bmatrix}
\mathbf{f}_{ 1 \times p} & \mathbf{0}_{1 \times r} & \mathbf{g}_{1 \times (q-p-r)} \\
\end{bmatrix}^T,
\end{equation}
where $\mathbf{f}_{ 1 \times p}$ and $\mathbf{g}_{1 \times (q-p-r)}$ are arbitrary vectors. 
\end{lemma}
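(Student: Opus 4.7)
The plan is to characterize $\mathrm{null}(\mathbf{K})$ through the orthogonal projector $\mathbf{P}:=\mathbf{K}^\dagger\mathbf{K}$ onto the row space of $\mathbf{K}$. By Theorem~\ref{thm_analysis_pr} this projector takes the form $\mathbf{P}=\mathbf{I}+\mathbf{N}$, with $\mathbf{N}$ the Hermitian block matrix of~(\ref{eq_k_pr_cond}) whose middle block-row and block-column (of size $r$) are zero. Writing $V$ for the subspace of vectors of the claimed form $[\mathbf{f}^T,\,\mathbf{0}^T,\,\mathbf{g}^T]^T$ and $W:=\mathrm{span}\{\mathbf{e}_p,\ldots,\mathbf{e}_{p+r-1}\}$ for its orthogonal complement, I would prove the lemma by establishing $\mathrm{null}(\mathbf{K})=\mathrm{null}(\mathbf{P})=V$ through two inclusions.

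For $\mathrm{null}(\mathbf{K})\subseteq V$: if $\mathbf{K}\mathbf{z}=\mathbf{0}$, premultiplying by $\mathbf{K}^\dagger$ gives $\mathbf{P}\mathbf{z}=\mathbf{0}$, so $\mathbf{z}=-\mathbf{N}\mathbf{z}$. Because the middle block-row of $\mathbf{N}$ is zero, the middle block of $\mathbf{N}\mathbf{z}$ vanishes; hence the middle block of $\mathbf{z}$ vanishes and $\mathbf{z}\in V$.

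For $V\subseteq\mathrm{null}(\mathbf{K})$: the zero middle block-column of $\mathbf{N}$ implies $\mathbf{P}$ leaves $V$ invariant, so the restriction $\mathbf{P}|_V$ is a Hermitian projector whose matrix, under the natural identification of $V$ with its $(q-r)$ non-middle coordinates, equals $\mathbf{I}_{q-r}+\mathbf{M}$ with $\mathbf{M}:=\bigl[\begin{smallmatrix}\mathbf{B}&\mathbf{C}\\ \mathbf{C}^H&\mathbf{D}\end{smallmatrix}\bigr]$. The identity $\mathbf{P}^2=\mathbf{P}$ forces $\mathbf{M}(\mathbf{M}+\mathbf{I})=\mathbf{0}$, so every eigenvalue of $\mathbf{M}$ lies in $\{0,-1\}$. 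I would then show that the ``largest zero sub-block'' hypothesis of Theorem~\ref{thm_analysis_pr} rules out any $0$-eigenvalue of $\mathbf{M}$: a unit $0$-eigenvector $\mathbf{v}$ of $\mathbf{M}$ would embed into $V$ as a vector $\mathbf{v}_V$ fixed by $\mathbf{P}$ and lying outside $W$, and a unitary reindexing of the outer coordinates that aligns $\mathbf{v}_V$ with a single basis direction would turn the corresponding row and column of $\mathbf{N}$ into zeros, producing a Hermitian zero principal sub-block of size $r+1$ and contradicting the maximality of $r$. Hence $\mathbf{M}=-\mathbf{I}_{q-r}$, $\mathbf{P}|_V=\mathbf{0}$, and every $\mathbf{z}\in V$ satisfies $\mathbf{K}\mathbf{z}=\mathbf{0}$.

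The main obstacle is precisely this appeal to maximality in the reverse inclusion. Theorem~\ref{thm_analysis_pr} records its block form in the standard basis fixed by the analysis filters, so making the basis-alignment argument rigorous requires reading ``largest zero sub-block'' as largest over the unitary re-parametrisations of the outer coordinates permitted by the construction: the PR requirement only pins down the middle-block indices $d,\ldots,d+M-1$ corresponding to the reconstruction-delay window, while the ordering within the outer blocks is not intrinsic to the pseudocirculant criterion. Under this natural reading the spectral conclusion $\mathbf{M}=-\mathbf{I}_{q-r}$ follows and $\mathrm{null}(\mathbf{K})=V$ as claimed; otherwise one must supplement the argument by exploiting the specific convolution-plus-decimation structure of $\mathbf{K}$ exposed in Section~\ref{sec_matrixrep} to force the same spectral conclusion directly.
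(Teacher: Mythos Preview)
Your algebraic route via the projector $\mathbf{P}=\mathbf{K}^\dagger\mathbf{K}$ is genuinely different from the paper's argument, and your forward inclusion $\mathrm{null}(\mathbf{K})\subseteq V$ is clean and correct. The paper instead argues operationally: it feeds a null-space vector $\bar{\mathbf{u}}(Mn)$ through the filter bank, observes that the subband vector and hence the reconstructed output vanish, and then invokes the PR relation $\hat u(n)=u(n-d-M+1)$ to force $u(Mn+i-d-M+1)=0$ for $0\le i<M$. Running this over the full admissible delay range $p\le d\le p+r-M$ kills the middle $r$ coordinates of $\bar{\mathbf{u}}(Mn)$, yielding the claimed form. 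Your projector argument gets the same conclusion without invoking the signal path, which is tidier; the paper's version has the advantage of making the link to the delay window explicit.

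Where you go astray is in attempting the reverse inclusion $V\subseteq\mathrm{null}(\mathbf{K})$. The paper does not prove this, and on a plain reading of its lemma and proof it does not claim it: the argument only shows that every null-space vector has the displayed shape, i.e.\ $\mathrm{null}(\mathbf{K})\subseteq V$. Your spectral step---forcing $\mathbf{M}=-\mathbf{I}_{q-r}$ via the ``largest zero sub-block'' hypothesis---requires reinterpreting maximality of $r$ up to unitary re-parametrisations of the outer coordinates, which Theorem~\ref{thm_analysis_pr} does not license (its block structure is stated in the fixed standard basis determined by the analysis filters). In that fixed basis, maximality of the contiguous zero block only prevents the \emph{adjacent} rows/columns of $\mathbf{N}$ from vanishing; it does not preclude a $0$-eigenvector of $\mathbf{M}$ supported away from the boundary, so the inclusion $V\subseteq\mathrm{null}(\mathbf{K})$ need not hold and your attempted proof of it has a real gap. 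Drop that half and you match the paper's content by a different, equally valid, route.
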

\begin{proof} 
The matrix $\mathbf{K}$ represents the analysis bank of a filter bank. Hence, when an input signal $u(n)$ is applied to the analysis bank, the resulting subband vector can be obtained as
\begin{equation}
\mathbf{v}_s(n) = \mathbf{K} \bar{\mathbf{u}}(Mn).
\end{equation}
Consider a case where  the vector $\bar{\mathbf{u}}(Mn)$ lies in the null space of the matrix $\mathbf{K}$. As a result, the vector $\mathbf{v}_s(n)$ as well as the output of the synthesis bank, both are zero. The reconstructed signal is then 
\begin{equation}
\hat{u}(Mn+i) = y_{M-1-i}(n) = 0, \qquad 0 \leq i < M.
\end{equation}
For PR, we require
\begin{equation}
\hat{u}(n) = u(n-d-M+1). 
\end{equation}
Combining the two, we get
\begin{equation}
\thickmuskip = 2mu
\medmuskip = 0mu  
\hat{u}(Mn+i) = u(Mn+i-d-M+1) = 0 , \qquad 0 \leq i \leq M-1.
\end{equation}
Thus the vector $\bar{\mathbf{u}}(Mn)$ should be of form
\begin{equation}
\thickmuskip = 0mu
\medmuskip = 0mu  
\arraycolsep = 1pt
\bar{\mathbf{u}}(Mn) = \begin{array}[t]{cccccccc}
 \bigl[u(Mn) & u(Mn-1) & \dots & u(Mn-d+1) & \mathbf{0}_{1 \times M} \\
  & u(Mn-d-M) & \dots & u(Mn-q+1) \bigr]^T.
 \end{array}
\end{equation} 
A basis vector for the null space of the matrix $\mathbf{K}$ should satisfy the above condition for all possible values of $d$ and hence should have the form
\begin{equation}
\mathbf{z} = \begin{bmatrix}
\mathbf{f}_{ 1 \times p} & \mathbf{0}_{1 \times r} & \mathbf{g}_{1 \times (q-p-r)} \\
\end{bmatrix}^T,
\end{equation}
where $\mathbf{f}_{ 1 \times p}$ and $\mathbf{g}_{1 \times (q-p-r)}$ are arbitrary vectors. 
\end{proof}
In this section, we presented various necessary and sufficient conditions for a PR synthesis bank solution to exist. In the next section, we present some experiments which validates our results. 

\begin{figure*}[!t]
\centering
\subfloat[Channel 0]{\includegraphics[width=0.46\linewidth]{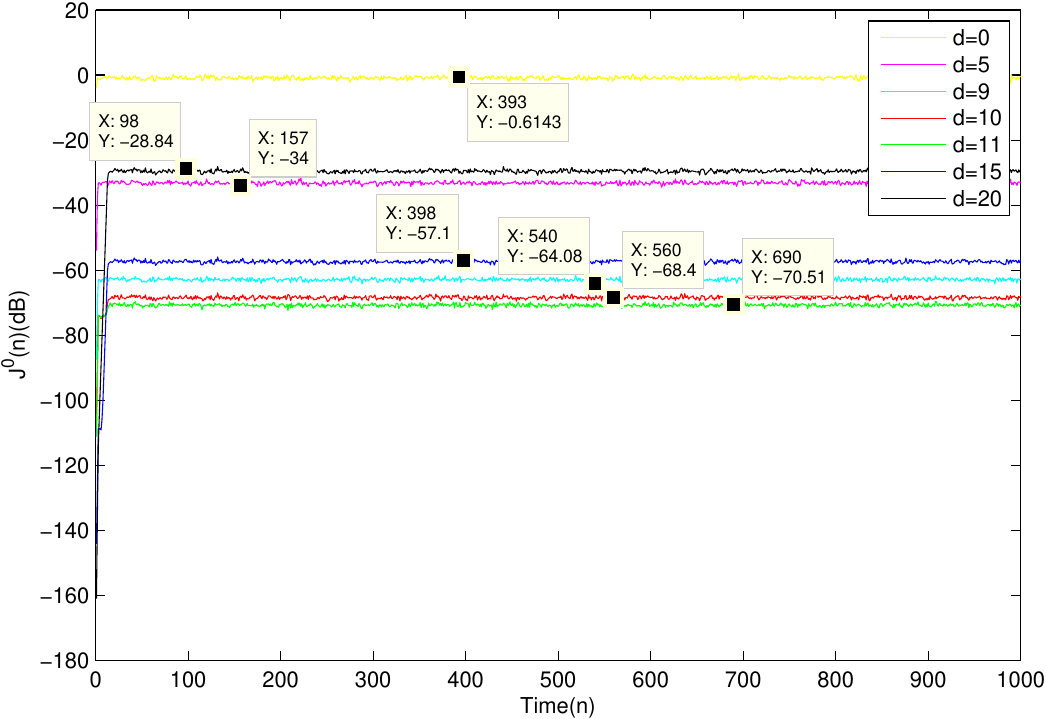}
\label{fig_exp_ensbch0}}
\subfloat[Channel 1]{\includegraphics[width=0.46\linewidth]{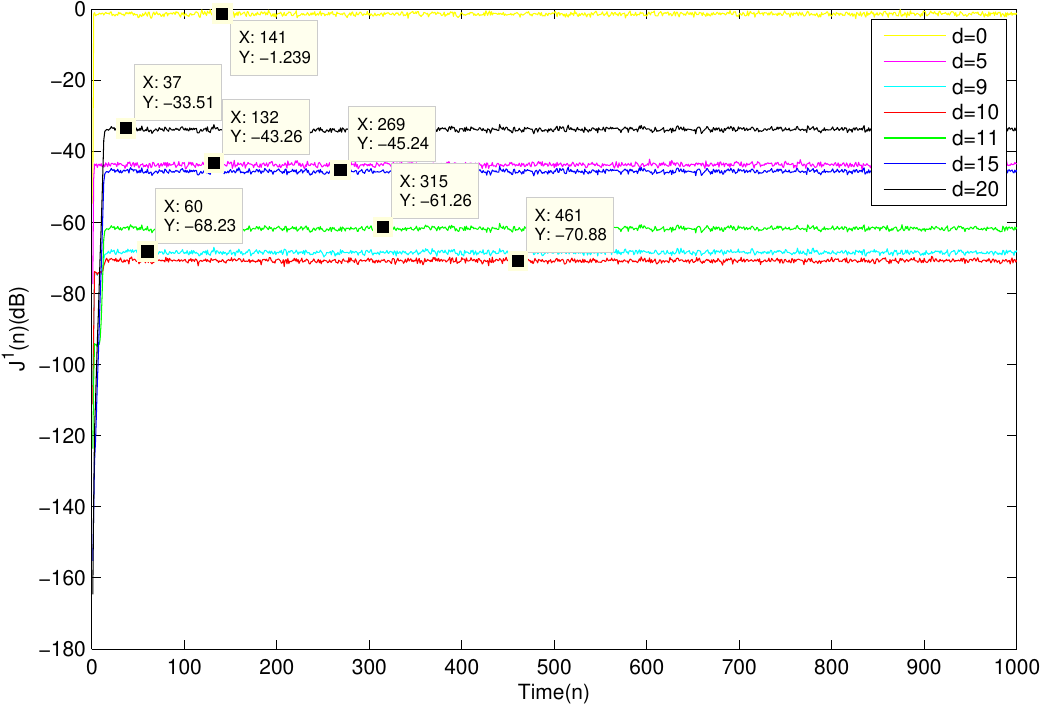}%
\label{fig_exp_ensbch1}}
\caption{Experiment 1: Ensemble-averaged squared error for various delay values}
\label{fig_exp_ensb}
\vspace{-12pt}
\end{figure*}
\section{Experimental Results}
\label{sec_exp}
In this section, we verify the derived results with a number of experiments. These experiments are chosen to demonstrate the wide scope of application of our design. In Experiment 1, we obtain the MSE for various delay values in the desired signal and compare the results. We present the improvement in reconstruction, in terms of the MSE, with addition of a new subband in Experiment 2. We also examine the impact of delay on PR. In Experiment 3, a PR non-uniform filter bank obtained through our design is verified with a non-stationary input signal. 
\subsection{Experiment 1}
We consider a 2-channel maximally decimated uniform analysis bank designed using fir1 routine of Matlab with the following parameters: the analysis filter $h_0(n)$ is low-pass filter with cut-off 0.6 and length 9, and $h_1(n)$ is high-pass filter with cut-off 0.4 and length 10. The input to the filter bank is a zero mean, unit variance second-order autoregressive(AR(2)) process with coefficients $0.7$ and $0.1$. We choose the length of the constituent filters of the matrix synthesis filter to be 11 and design the filter for various delay values using~(\ref{eq_aigen}). We obtain ensembled-averages of the channel-wise squared errors, shown in Figs.~\ref{fig_exp_ensbch0} and ~\ref{fig_exp_ensbch1}, for these synthesis filters by performing 200 independent runs. The corresponding theoretical values, obtained using~(\ref{eq_jmin_ch}), are provided in Table~\ref{tab_exp1}. It is to be mentioned here that the analysis bank does not satisfy Theorem~\ref{thm_analysis_pr}. We can see that the observed error values  are closer to the theoretical values for both channels. Further, we can notice that the best result among the selected delay values, in terms of the total MSE, is with delay 10. This shows that a better reconstruction can be obtained if some delay in the reconstructed output is acceptable. 
\begin{table}[htbp]
\caption{Experiment 1: MSE for various delay values}
\label{tab_exp1}
\vspace{-5pt}
\tabcolsep = 3pt
\centering
\resizebox{0.48\linewidth}{!}{
\begin{tabular}{c c c c }
\toprule 
{\centering $d$} & {\centering $J_{\text{min}}^0$(dB)} & {\centering $J_{\text{min}}^1$(dB)} & {\centering $J_{\text{min}}$(dB)} \\
\midrule
0 & -0.8673 & -1.3434 &  1.9115 \\ 
5 & -33.1717 & -43.7244 &  -32.8052 \\ 
9 &  -62.8689 & -68.3970 & -61.7967 \\ 
10 &  -68.3970 & -70.7269 &  -66.3972 \\ 
11 & -70.7269 & -61.6691 & -61.1606 \\ 
15 & -57.3320 & -45.5881 &  -45.3068 \\ 
20 &  -29.5288 & -33.7826 & -28.1442 \\
\bottomrule
\end{tabular}}
\vspace{-12pt}
\end{table}

\begin{table*}[!t]
\caption{Experiment 2: Variation in MSE with number of available subbands}
\label{tab_mse_vs_nosubband}
\vspace{-4pt}
\tabcolsep = 3pt
\centering
\resizebox{0.6\linewidth}{!}{
\begin{tabular}{c c c c c c c c}
\toprule 
{\centering $L$} & {\centering Available analysis filters} & {\centering $d$} & {\centering $J_{\text{min}}^0$(dB)} & {\centering $J_{\text{min}}^1$(dB)} & {\centering $J_{\text{min}}^2$(dB)} & {\centering $J_{\text{min}}^3$(dB)} & {\centering $J_{\text{min}}$(dB)} \\
\midrule
1 & $h_0(n)$ & 10 & -12.9106 & -12.9711 & -15.0069 & -14.9634 &  -7.8231 \\
1 & $h_1(n)$ & 10 &  -0.2141 &  -0.2360 &  -0.0490 &  -0.1161 & 5.8675 \\
1 & $h_2(n)$ & 10 & -0.0826 &  -0.0577 &  -0.1747 &  -0.0823 & 5.9215 \\
1 & $h_3(n)$ & 10 &  -0.0865  & -0.0499 &  -0.0612 &  -0.0241 & 5.9652 \\
2 & $h_0(n)$, $h_1(n)$ & 10 & -16.7631 & -19.0161 & -16.7292 & -19.1518 & -11.7388 \\
2 & $h_0(n)$, $h_2(n)$ & 10 &  -14.3354 & -13.1462 & -18.6208 & -15.8517 &  -9.0295 \\
2 & $h_0(n)$, $h_3(n)$ & 10 & -13.7442 & -13.8427 & -15.6931 & -15.6224 & -8.6055 \\
2 & $h_1(n)$, $h_2(n)$ & 10 & -0.2899 &  -0.2923 &  -0.2217 &  -0.2139 &  5.7663 \\
2 & $h_1(n)$, $h_3(n)$ & 10 & -0.3044 &  -0.2881 &  -0.1109 &  -0.1410 & 5.8104 \\
2 & $h_2(n)$, $h_3(n)$ & 10 &  -0.1676 &  -0.1071 &  -0.2355 &  -0.1088 & 5.8662 \\
3 & $h_0(n)$, $h_1(n)$, $h_2(n)$ & 10 &  -20.5040 & -20.3481 & -23.1342 & -23.1368 & -15.5518 \\
3 & $h_0(n)$, $h_1(n)$, $h_3(n)$ & 10 & -19.2065 & -24.3326 & -17.8184 & -21.2263 & -14.0058 \\
3 & $h_0(n)$, $h_2(n)$, $h_3(n)$ & 10 & -15.5108 & -14.0397 & -20.4628 & -16.7167 & -10.0944 \\
3 & $h_1(n)$, $h_2(n)$, $h_3(n)$ & 10 & -0.3786 &  -0.3432  & -0.2834  & -0.2416 & 5.7092 \\
4 & $h_0(n)$, $h_1(n)$, $h_2(n)$, $h_3(n)$ & 10 &  -45.1648 & -32.0844 & -145.2323 & -143.7481 & -31.8758\\
4 & $h_0(n)$, $h_1(n)$, $h_2(n)$, $h_3(n)$ & 11 &  -32.0844 & -145.2323 & -143.7481 & -143.5253 & -32.0844\\
4 & $h_0(n)$, $h_1(n)$, $h_2(n)$, $h_3(n)$ & 12 &  -145.2323 & -143.7481 & -143.5253 & -145.3962 & -138.3732\\
4 & $h_0(n)$, $h_1(n)$, $h_2(n)$, $h_3(n)$ & 13 &  -143.7481 & -143.5253 & -145.3962 & -30.2194 & -30.2194\\
\bottomrule
\end{tabular}}
\vspace{-12pt}
\end{table*}
\subsection{Experiment 2}
In this experiment, we examine the effect of number of subbands on the MSE along with the role of delay in achieving PR. We consider an Extended Lapped Transform (ELT) filter bank with decimation factor $M=4$ and overlapping factor $K'=2$. The prototype filter for ELT is designed using the optimized butterflies angles mentioned in~\citep{jn:malvar_fast_elt}. The input signal to the filter bank is a realization of an AR(1) process with coefficient $0.95$, zero mean and unit variance. The length of the synthesis filter is $4$ and the delay is $10$. We vary the number of subbands ($L$) available to the synthesis filter and obtain the MSE for each channel. On analyzing the obtained results, as given in Table~\ref{tab_mse_vs_nosubband}, we can conclude that for number of subbands $ L < 4$, there is a variation in the MSE with subbands chosen. However independent of the choice of subbands, if we add additional subband to the system, then we always see an improvement in the total MSE as well as in the channel-wise MSEs. This is possible since we have exploited the correlation between the input signal components while designing the matrix synthesis filter. With the number of subbands $L = 4$, the analysis bank satisfies the PR condition given by Theorem~\ref{thm_analysis_pr}. We vary the delay and find out that PR is possible only with $d=12$. This is consistent with the range of delay values provided by~(\ref{eq_delay_prcondition}).

\begin{figure}[htbp]
\centering
\includegraphics[scale = 0.55]{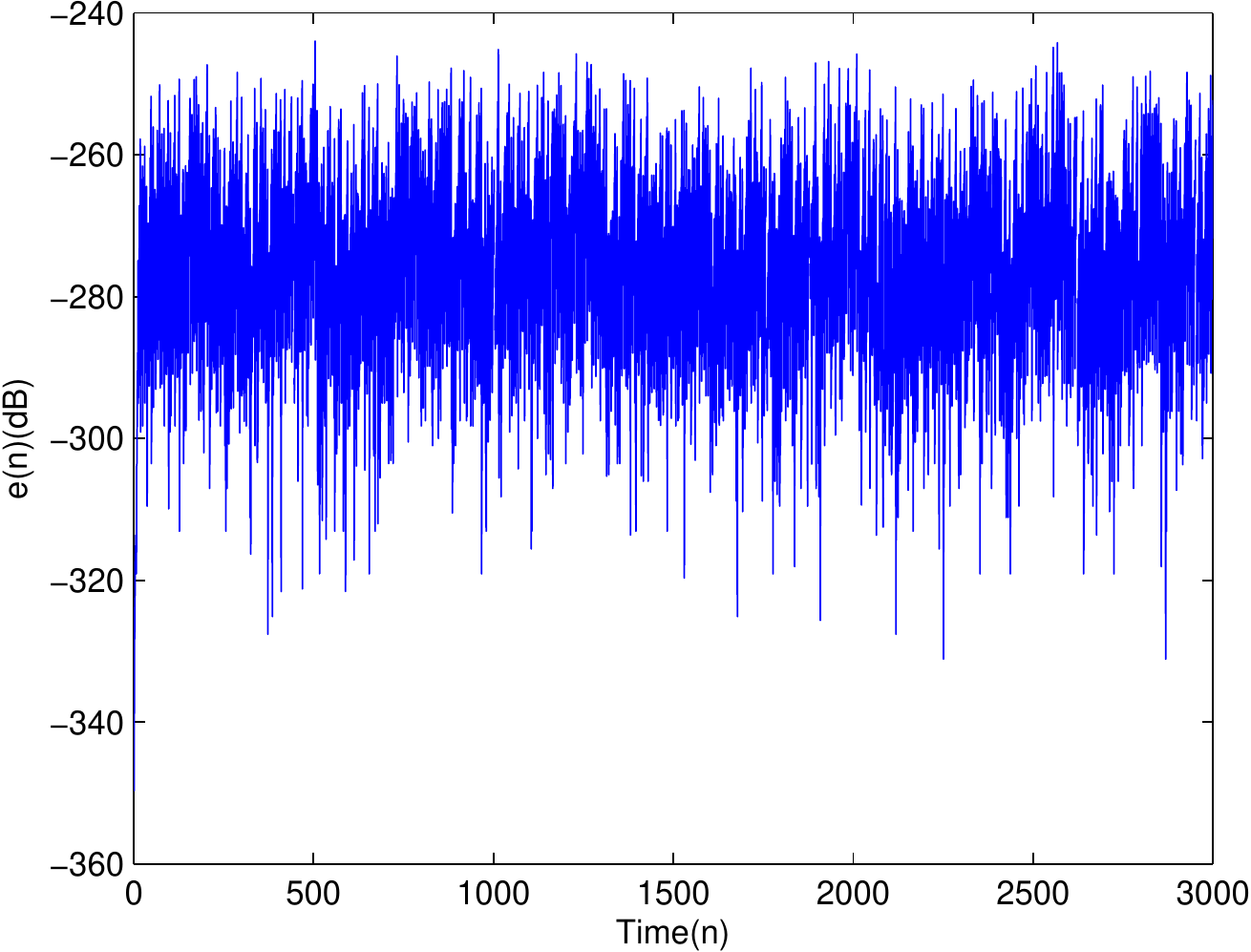}
\vspace{-8pt}
\caption{Experiment 3: Error versus time curve.}
\label{fig_exp3}
\vspace{-10pt}
\end{figure}
\subsection{Experiment 3}
In this experiment, we design a PR synthesis bank for a non-uniform analysis bank and test it with a non-stationary input signal. We have the following setup:
\begingroup
\allowdisplaybreaks
\setlength{\medmuskip}{0mu}
\setlength{\thickmuskip}{1mu}
\begin{align}
M_0 &= 2, & h_0(n) &= \{-0.1295, -0.12,0.3695,0.5018, \notag \\
& & & 0.3695,-0.12,-0.1295\} ,\notag \\
M_1 &= 3, & h_1(n) &= \{0.1308, 0.1728,-0.3775,0.2117, \notag \\
& & & 0.2117,-0.3775,0.1728, 0.1308\}, \notag \\
M_2 &= 6, & h_2(n) &= \{0.0717, 0.0749,-0.1148,0.1659, \notag \\ 
& & & -0.2069,0.2224,-0.2069, 0.1659,-0.1148, \notag \\ 
& & & 0.0749,0.0717\}, \notag \\
M_3 &= 6, & h_3(n) &= \{0.0881, 0.1617,-0.1686,-0.1538, \notag \\ 
& & & 0.1752,0.1752,-0.1538, -0.1686, \notag \\ 
& & & 0.1617,0.0881\}.
\end{align}
\endgroup
The analysis filters are designed using Parks-McClellan algorithm. We convert the NUFB into an equivalent UFB with the help of blocking operation. The resulting UFB has decimation factor equal to 6 and the following analysis filters:
\begingroup
\allowdisplaybreaks
\setlength{\thickmuskip}{0mu}
\setlength{\medmuskip}{0mu}
\begin{align}
g_0(n) &= h_0(n), & g_1(n) &= h_0(n-2), & g_2(n) &= h_0(n-4), \notag \\
g_3(n) &= h_1(n), & g_4(n) &= h_1(n-3), & g_5(n) &= h_2(n), \notag \\
g_6(n) &= h_3(n). & & & & 
\end{align}
\endgroup
We choose the length of the synthesis filter to be 7 and delay $d = 0$. The resulting matrix $\mathbf{K}$ satisfies the PR condition~(\ref{eq_k_pr_cond}) and using~(\ref{eq_genPR}), we obtain a PR solution with $\mathbf{w} = \mathbf{0}$. We apply a non-stationary signal generated using $\texttt{randn}(n) \sin(0.1n^2)$ to the filter bank and evaluate the reconstruction error as (refer Fig.~\ref{fig_wfsynth}):
\begin{equation}
e(n) = |\hat{u}(n) - d(n-M+1)|.
\end{equation}
Fig.~\ref{fig_exp3} shows the resulting error versus time curve. The small error which we obtain is due to finite precision in MATLAB.

\section{Conclusion}
\label{sec_conclusion}
We have addressed an important problem of designing an optimal FIR synthesis bank for a given non-maximally decimated analysis bank. To design such a synthesis bank, a framework was developed to apply matrix Wiener filtering to the problem and solution is obtained for a given length of synthesis filter and delay in the reconstructed output. We experimentally showed that better MSE can be obtained in some cases by adding appropriate delay in the reconstructed output. We found that the reconstruction error keeps decreasing as more and more multirate observations are added for a signal. The matrix representation developed in this paper can be utilized to carry out time-domain analysis of any multirate circuit. The infinite number of PR or non-PR synthesis bank solutions possible with our design can be optimized for a desired property in a future work.


%

%

\appendix[FIR Matrix Wiener Filter Derivation]
\label{app_matder}
Consider an optimum matrix filtering problem as shown in Fig.~\ref{fig_opt} where the vector signal $\mathbf{v}(n)$ is the input and the signal $\mathbf{d}(n)$ is an estimate of the desired signal. In order to obtain the Wiener solution, the input and desired signals are assumed to be jointly stationary~\citep{bk:hayes}. The Wiener solution can be derived for either FIR or IIR case. Here we derive the FIR Wiener solution for a matrix filter whose constituent filters are causal and of length $P$. 
\begin{figure}[!t]
\centering
\includegraphics[scale = 0.7]{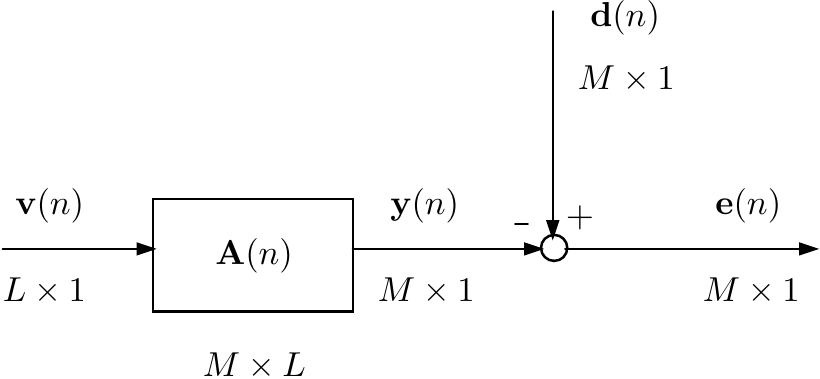}
\vspace{-8pt}
\caption{A general matrix Wiener filtering setup}
\label{fig_opt}
\vspace{-12pt}
\end{figure}
We begin by applying the principle of orthogonality~\cite{bk:haykin} to the setup and obtain
\begin{equation}
\allowdisplaybreaks
\medmuskip = 0mu
\begin{split}
\mathbf{E}[e_i(n) v_j^*(n-k)] &= 0, \qquad 0 \leq i < M; \\
& \quad 0 \leq j < L; \qquad   0 \leq k < P.
\end{split}
\end{equation}
The above equation can be written as
\begin{equation}
\thickmuskip = 3mu
\medmuskip=0mu
\begin{split}
\mathbf{E}[d_i(n) v_j^*(n-k)] &= \mathbf{E}[y_i(n) v_j^*(n-k)] \\
&= \sum_{r=0}^{L-1} \sum_{l=0}^{P-1} a_{i,r}(l) \mathbf{E}[v_r(n-l)v_j^*(n-k)].
\end{split}
\end{equation}
As the input and desired signals are jointly stationary, we can write
\begin{equation}
r_{d_i v_j}(k) = \sum_{r=0}^{L-1} \sum_{l=0}^{P-1} a_{i,r}(l) r_{v_r v_j}(k-l) .
\end{equation}
The above equation can be expressed in matrix form as
\begin{equation}
\thickmuskip = 2mu
\mathbf{r}_{d_i v_j} = \sum_{r=0}^{L-1} \mathbf{a}_{i,r}^T \mathbf{R}_{v_r v_j}, \qquad 0 \leq i < M; \qquad 0 \leq j < L, 
\end{equation}
where
\begingroup
\setlength{\thickmuskip}{0mu}
\setlength{\medmuskip}{0mu}
\setlength{\arraycolsep}{2pt}
\begin{align}
\mathbf{r}_{d_i v_j} &= \begin{bmatrix}
r_{d_i v_j}(0) & r_{d_i v_j}(1) & \dots & r_{d_i v_j}(P-1)
\end{bmatrix}, \notag \\
\mathbf{a}_{i,r} &= \begin{bmatrix}
a_{i,r}(0) & a_{i,r}(1) & \dots & a_{i,r}(P-1) 
\end{bmatrix}^T, \notag \\
\mathbf{R}_{v_r v_j} &= \begin{bmatrix}
r_{v_r v_j}(0) & r_{v_r v_j}(1) & \dots & r_{v_r v_j}(P-1) \\
r_{v_r v_j}(-1) & r_{v_r v_j}(0) & \dots & r_{v_r v_j}(P-2) \\
\vdots & \vdots & \ddots & \vdots \\
r_{v_r v_j}(-P+1) & r_{v_r v_j}(-P+2) & \dots & r_{v_r v_j}(0)
\end{bmatrix}.
\end{align}
\endgroup
A further compact form is
\begin{equation}
\begin{split}
\mathbf{r}_{d_i v} = \mathbf{a}_{i}^T \mathbf{R}_{v v}, \qquad 0 \leq i \leq M-1,
\end{split}
\end{equation}
where
\begin{align}
\mathbf{r}_{d_i v} &= \begin{bmatrix}
\mathbf{r}_{d_i v_0} & \mathbf{r}_{d_i v_1} & \dots & \mathbf{r}_{d_i v_{L-1}}
\end{bmatrix}, \notag \\
\mathbf{a}_{i} &= \begin{bmatrix}
\mathbf{a}_{i,0}^T & \mathbf{a}_{i,1}^T & \dots & \mathbf{a}_{i,L-1}^T 
\end{bmatrix}^T, \notag \\
\mathbf{R}_{v v} &= \begin{bmatrix}
\mathbf{R}_{v_0 v_0} & \mathbf{R}_{v_0 v_1} & \dots & \mathbf{R}_{v_0 v_{L-1}} \\
\mathbf{R}_{v_1 v_0} & \mathbf{R}_{v_1 v_1} & \dots & \mathbf{R}_{v_1 v_{L-1}} \\
\vdots & \vdots & \ddots & \vdots \\
\mathbf{R}_{v_{L-1} v_0} & \mathbf{R}_{v_{L-1} v_1} & \dots & \mathbf{R}_{v_{L-1} v_{L-1}}
\end{bmatrix} .
\end{align}
The obtained result is a system of equations containing $LP$ equations in terms of $LP$ unknowns. These are the Wiener-Hopf equations for the matrix Wiener filter. Using them, we can determine the $i$-th row of the matrix Wiener filter.


\ifCLASSOPTIONcaptionsoff
  \newpage
\fi



\bibliographystyle{IEEEtran}

%
%
%

%

\vskip -1.5\baselineskip plus -1fil
\begin{IEEEbiography}[{\includegraphics[width=1in,height=1.25in,clip,keepaspectratio]{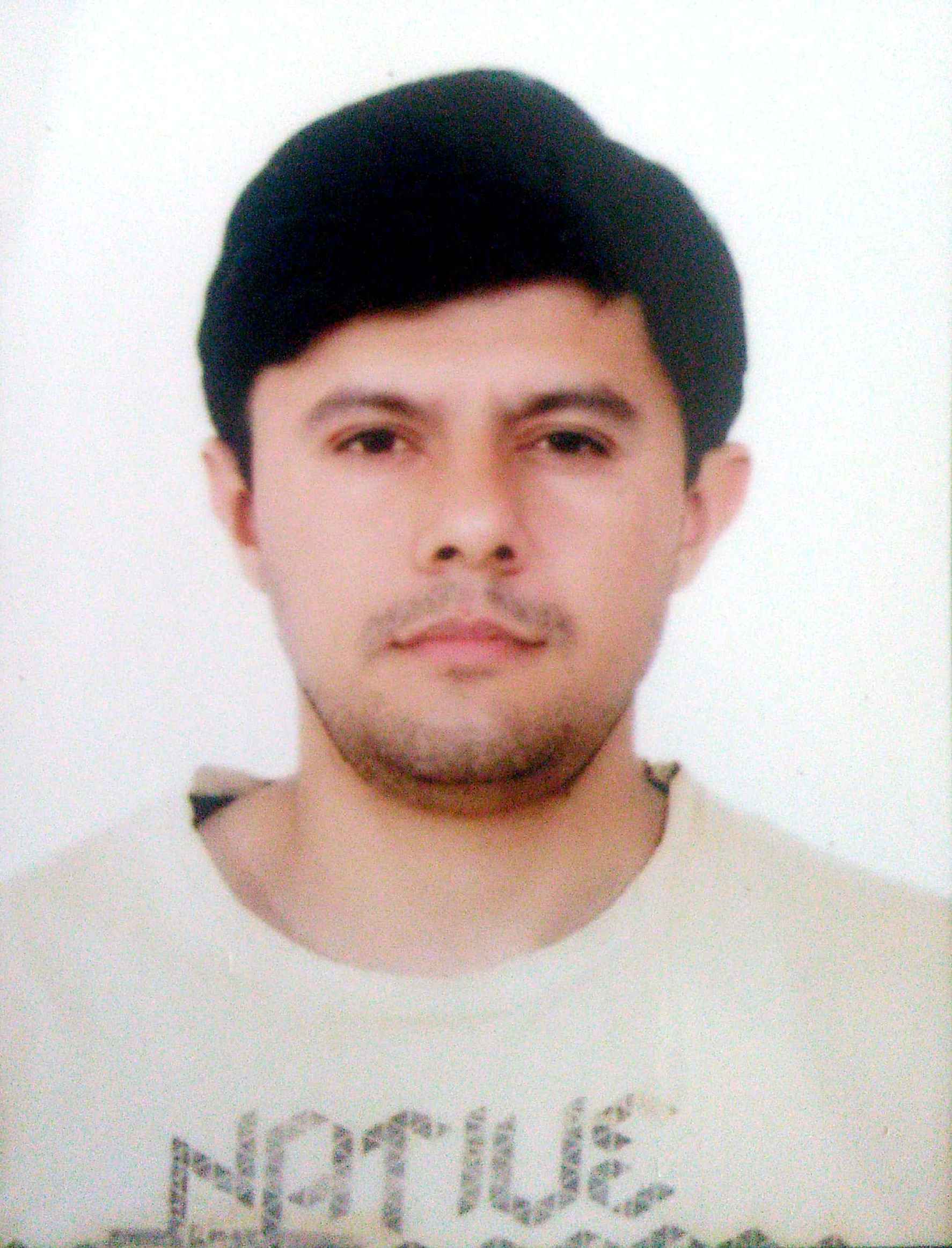}}]{Sandeep Patel} received the B.Tech degree in Electronics and Communications Engineering from the Indian Institute of Technology, Roorkee, in 2007. After completing graduation, he joined Nvidia Graphics and worked on video technologies till June, 2008. From June, 2008 to July, 2010, he was with Adobe Systems working in the flash media server team. He received the MS(R) degree from the Indian Insitute of Technology, Delhi in 2013. He is currently pursuing his Ph.D. degree from the Bharti School of Telecom Technology and Management, Indian Institute of Technology, Delhi.
\end{IEEEbiography}
\vskip -1.5\baselineskip plus -1fil
\begin{IEEEbiography}[{\includegraphics[width=1in,height=1.25in,clip,keepaspectratio]{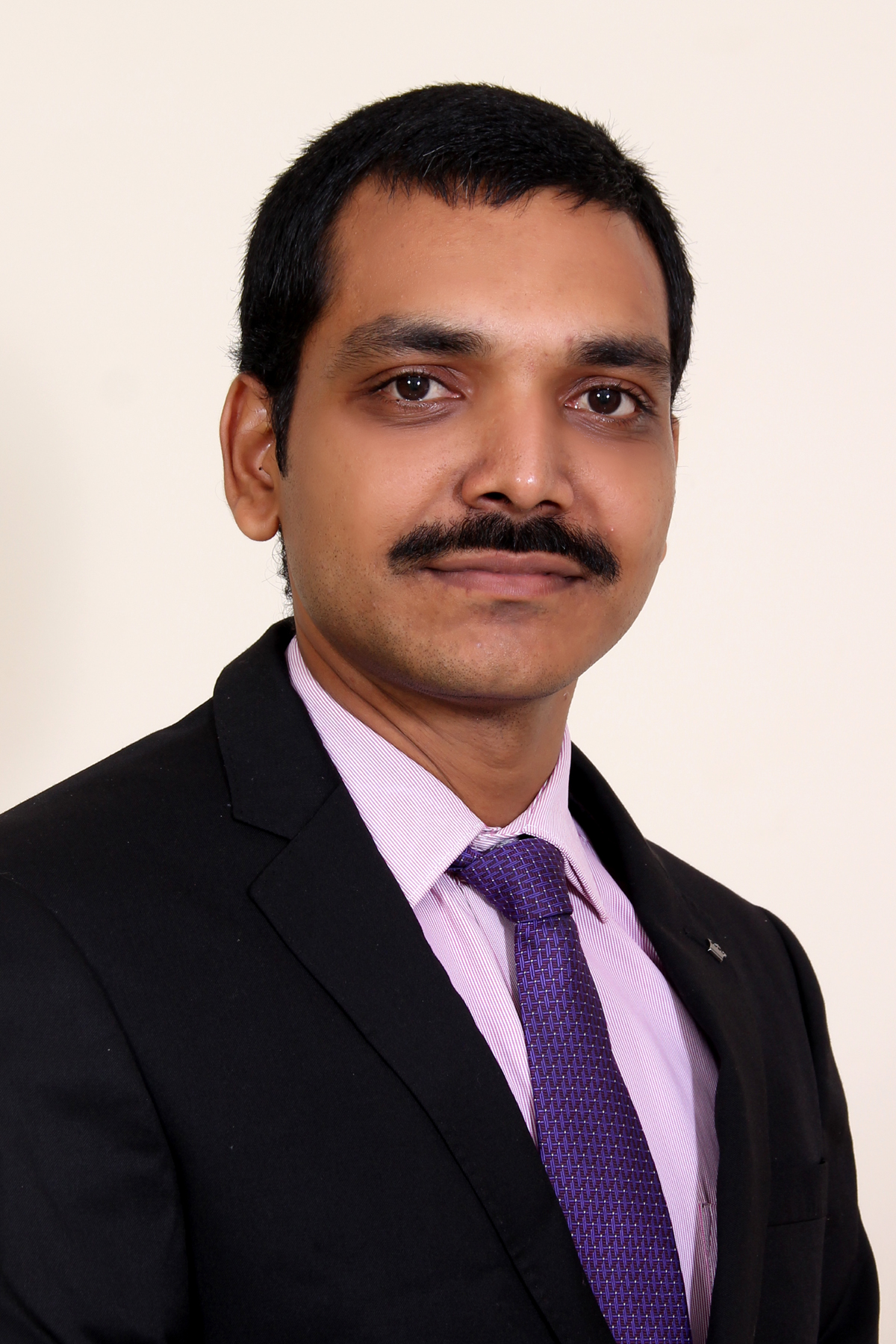}}]{Ravindra Dhuli} received the Ph.D. degree in signal processing from the Department of Electrical Engineering, Indian Institute of Technology Delhi, in 2010. He is currently an Associate Professor with the Department of Electronics and Communications Engineering, Vellore Institute of Technology, Andhra Pradesh, India. His research interests include multirate signal processing, statistical signal processing, image processing, and mathematical modeling.
\end{IEEEbiography}
\vskip -1.5\baselineskip plus -1fil
\begin{IEEEbiography}[{\includegraphics[width=1in,height=1.25in,clip,keepaspectratio]{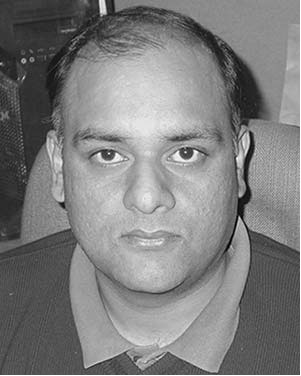}}]{Brejesh Lall} (M'05) received the B.E. and M.E. degrees in electronics and communications engineering from Delhi College of Engineering, Delhi University, in 1991 and 1992, respectively, and the Ph.D. degree in the area of multirate signal processing from the Indian Institute of Technology, Delhi, in 1999. His research interests are in signal processing with application to image processing and communication systems. He has over 150 papers in international journals and refereed conferences. From September 1997 to June 2005, he worked at Hughes Software Systems, in the digital signal processing group. Since July 2005, he has been with the faculty of the Department of Electrical Engineering, Indian Institute of Technology, Delhi, where he is currently a Professor and head of the Bharti School of Telecom Technology and Management. He is also the coordinator of Ericsson 5G Center of Excellence at IIT Delhi. 
\end{IEEEbiography}
%
%
%




\end{document}